\newtheorem{lem}{Lemma}
\theoremstyle{definition}
\theoremstyle{remark}
\newcommand{\sinc}{\operatorname{sinc}}
\begin{document}

\title{Cooperative photon emission rates in random atomic clouds}
\author{Viviana Viggiano}
\affiliation{Dipartimento di Fisica, Universit\`a di Bari, I-70126 Bari, Italy}
\affiliation{INFN, Sezione di Bari, I-70125 Bari, Italy}
\author{Romain Bachelard}
\affiliation{Universit\'e C\^ote d'Azur, CNRS, Institut de Physique de Nice, 06560 Valbonne, France}
\affiliation{Departamento de F\'isica, Universidade Federal de S\~ao Carlos, Rodovia Washington Lu\'is, km 235-SP-310, 13565-905 S\~ao Carlos, S\~ao Paulo, Brazil}
\author{Fabio Deelan Cunden}
\affiliation{Dipartimento di Matematica, Universit\`a di Bari, I-70125 Bari, Italy}
\affiliation{INFN, Sezione di Bari, I-70125 Bari, Italy}
\author{\\ Paolo Facchi}
\affiliation{Dipartimento di Fisica, Universit\`a di Bari, I-70126 Bari, Italy}
\affiliation{INFN, Sezione di Bari, I-70125 Bari, Italy}
\author{Robin Kaiser}
\affiliation{Universit\'e C\^ote d'Azur, CNRS, Institut de Physique de Nice, 06560 Valbonne, France}
\author{Saverio Pascazio}
\affiliation{Dipartimento di Fisica, Universit\`a di Bari, I-70126 Bari, Italy}
\affiliation{INFN, Sezione di Bari, I-70125 Bari, Italy}
\author{Francesco V. Pepe}
\affiliation{Dipartimento di Fisica, Universit\`a di Bari, I-70126 Bari, Italy}
\affiliation{INFN, Sezione di Bari, I-70125 Bari, Italy}
\date{October 20, 2023}

\begin{abstract}
We investigate the properties of the cooperative decay modes of a cold atomic cloud, characterized by a Gaussian distribution in three dimensions, initially excited by a laser in the linear regime.
We study the properties of the decay rate matrix $S$, whose dimension
coincides with the number of atoms in the cloud, in order to get a deeper insight into properties of cooperative photon emission. Since the atomic positions are random, $S$  is a Euclidean random matrix whose entries are function of the atom distances.
  We show that, in the limit of a large number of atoms in the cloud, the eigenvalue distribution of $S$ depends on a single parameter $b_0$, called the cooperativeness parameter, which can be  viewed as a quantifier of the number of atoms that are coherently involved in an emission process. For very small values of $b_0$, we find that the limit eigenvalue density is approximately triangular. We also study the nearest-neighbour spacing distribution and the eigenvector statistics, finding that, although the decay rate matrices are Euclidean, the bulk of their spectra mostly behaves according to the expectations of classical random matrix theory. In particular, in the bulk there is level repulsion and the eigenvectors are delocalized, therefore exhibiting the universal behaviour of chaotic quantum systems.
\end{abstract}


\maketitle

\section{Introduction}
The study of cooperative effects in atom-photon interactions has attracted great attention in the last decades since the seminal article by Dicke~\cite{Dicke}. A system consisting of two or more atoms interacting with the electromagnetic field behaves very differently compared to a single isolated atom. An example of cooperative effect is the modification of the decay rate of a cold atomic cloud, which is enhanced or suppressed with respect to the decay rate $\Gamma$ of an isolated atom, giving rise to the phenomena of \textit{superradiance} and \textit{subradiance}~\cite{exp_sub,exp_sup,b0,essay,intro,sub_kaiser2021, sub2_kaiser2021,direct2}. Also the atomic transition frequency is affected by the presence of other atoms in the cloud, leading to the so called \textit{collective Lamb shift}~\cite{lamb,lamb2}. Overall, the interaction of photons with atomic ensembles gives rise to a variety of cooperative effects~\cite{review_coop} which can be explained only in terms of a collective behavior rather than the individual (independent) components of the system.
	 
Besides the theoretical interest in comprehending the physical processes underlying cooperative effects, the study of light-matter interfaces~\cite{interfaces} is currently the focus of intense experimental research due to its applications ranging from quantum optics and photonics to quantum information and communication. For instance, superradiance can be exploited in the development of an ultra-narrow linewidth superradiant laser~\cite{laser} or for fast writing and reading of quantum information, while subradiance is particularly useful for quantum memory devices~\cite{quantum_internet,appl_super,appl_sub}. 

In this article, we investigate the properties of the cooperative decay modes of a cold atomic cloud, characterized by a Gaussian distribution in three dimensions, initially excited by a laser in the linear regime. Atomic positions are random and the natural formulation of such a problem is in terms of random decay rate matrices, whose dimension coincides with the number of atoms in the cloud. Since the entries of such matrices depend on the distances between pairs of atoms, they fall in the category of Euclidean random matrices. The main objective of our work is to determine the asymptotic properties of the decay rate matrices spectra. In tackling this problem, it is worth noticing that it is not trivial even to determine the relevant physical parameter(s) on which the asymptotic regime depends. We remark that the Gaussian distribution considered in this article represents a realistic approximation of the experimental situations. A similar system has been experimentally studied for example in~\cite{exp_sub,exp_sup,b0,sub_kaiser2021,sub2_kaiser2021,K}, while a theoretical investigation of the associated random matrices can be found in~\cite{skip_goetschy2,skip_goetschy}.

The article is organized as follows. In Sec.~\ref{sec:systema} we introduce the physical system and set up notation, formulating the problem in terms of Euclidean random matrices (ERM).
The spectrum of the ERM of interest is introduced in Sec.~\ref{sec:spectrum}, where we discuss  various asymptotics regimes. The main scaling limit is analysed in Sec.\
\ref{sec:smallb0} for small values of the so-called cooperativeness parameter, that quantifies the number of atoms that coherently take part in the emission process.
In Sec.~\ref{sec:NNSD} we first review some relevant microscopic statistics used in random matrix theory. Then, we present the results of our numerical study of the local eigenvalue statistics and the delocalization properties of the eigenvectors of the ERM of interest.  We conclude in Sec.~\ref{sec:conclusions} with a summary of the results and some discussion.
Some technical results are reported in Appendices~\ref{app:positive}-~\ref{app:joint}-~\ref{app:mom}.

\section{Physical setting}
\label{sec:systema}

We consider a system of $N$ atoms with the same internal structure, at positions $\vb*{r}_j$, $j=1,\ldots,N$. 
Each atom is approximated as a two-level system with transition frequency $\omega_a$.
The atomic cloud is illuminated by a laser which is described classically by a monochromatic plane wave with wavevector $\vb*{k_0}$, and electric field $\vb*{E_0}$. 
The laser frequency $\omega_0=c\abs{\vb*{k_0}}$ differs from the atomic transition frequency by the detuning $\Delta_0=\omega_0-\omega_a$.
The system under consideration is depicted in Fig.~\ref{fig:system}.

\begin{figure}[h!]
\centering
\includegraphics[scale=0.24]{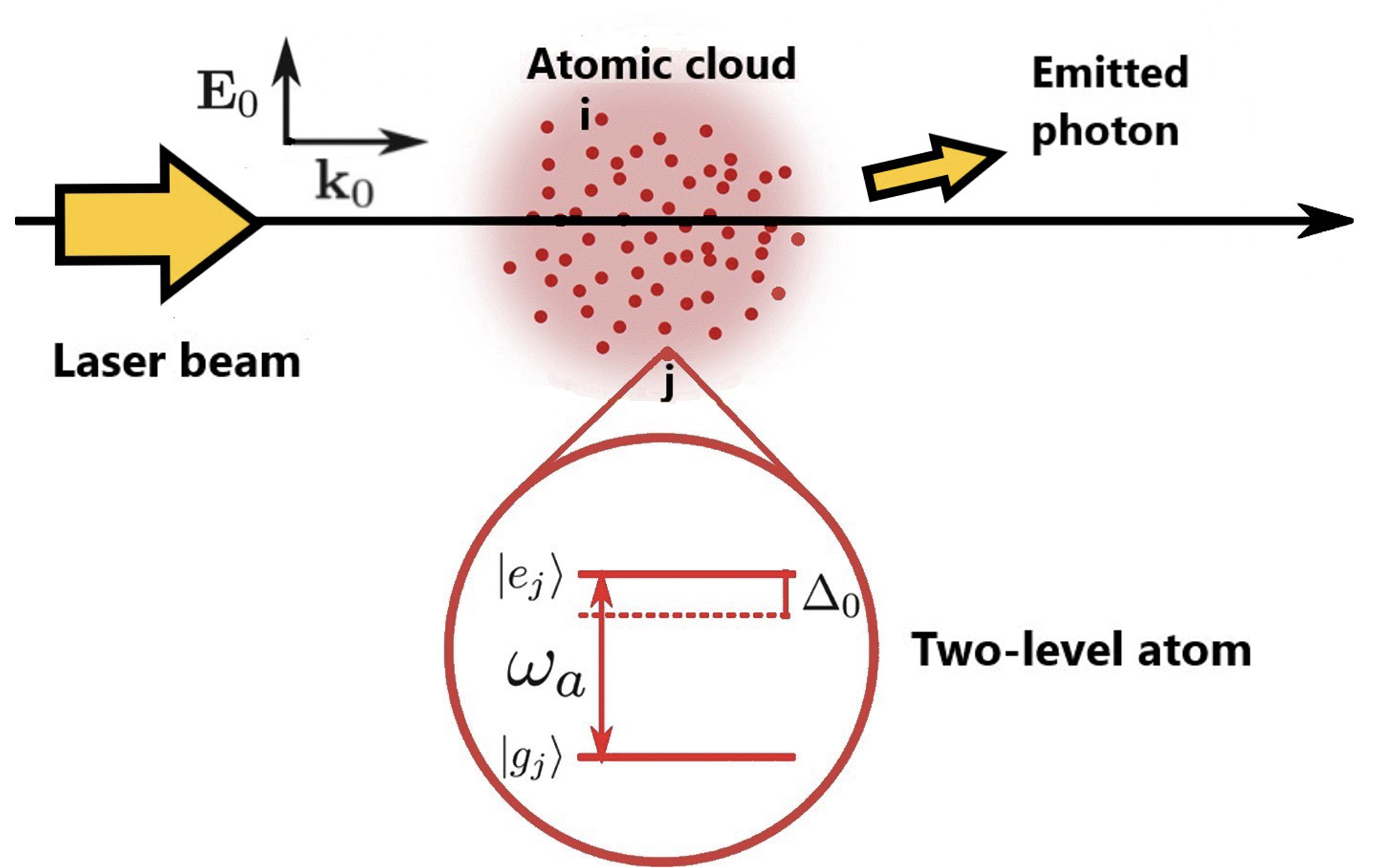}
\caption{Sketch of the system investigated. A Gaussian cloud of two-level atoms with transition frequency $\omega_a$ is illuminated by a laser with wavevector $\vb*{k_0}$, electric field $\vb*{E_0}$ and detuning $\Delta_0$. In the weak excitation limit, the cloud can absorb and emit at most one photon at a time.}
\label{fig:system}
\end{figure}

We assume that the $N$ atoms in the cloud interact with the electric field according to the dipole model in the rotating wave approximation. We will work in the scalar approximation~\cite{Cohen1,Cohen} (for the vectorial case see~\cite{vectorial}) and in the weak excitation limit, the so called \textit{linear regime}, such that the atom-field states can have at most one excited atom or one emitted photon. Furthermore, we assume that the atoms are fixed throughout the whole system evolution, which is a reasonable approximation in the case of large atomic mass and sufficiently low temperature (see Ref.~\cite{Weiss2019} for a study of the robustness of lifetimes against thermal decoherence).

Under these assumptions, denoting by $\beta_j(t)$ the \emph{excitation probability amplitude} of the $j$-th atom ($j=1,\dots,N$), one obtains the following set of coupled equations~\cite{exp_sup, K, Courteille}:
\begin{align}
\begin{aligned}
\dv{\beta_j(t)}{t}= &- \frac{\Gamma}{2}\beta_j(t) -\frac{\Gamma}{2} \sum_{m\neq j} \frac{e^{ik_a r_{jm}}}{ik_a r_{jm}}\beta_m(t) \\ &
-\frac{i\Omega_0}{2} e^{i \vb*{k_0}\vdot\vb* {r}_j}e^{-i\Delta_0 t}\,,
\label{eq:final_betaj}
\end{aligned}
\end{align}
where 
$\Omega_0=-\vb*{E_0} \vdot \vb*d_{eg} /\hbar$ is the Rabi frequency, proportional to the atom dipole moment $\vb*d_{eg}$, 
$\Gamma=\abs{\vb*d_{eg}}^2 \omega_a^3/2 \pi \hbar c^3 \epsilon_0$ is the linewidth of the atomic transition, $r_{jm}=\|\vb*{r}_j - \vb*{r}_m\|$ is the distance between the $j$-th and the $m$-th atom, and
\begin{equation}
  k_a = \frac{\omega_a}{c}.
\end{equation}

We will focus on the evolution equation of the total excitation probability 
$P(t)=\vb*{\beta}^{\dagger}(t)\vb*{\beta}(t)$, where 
\begin{equation}
	\vb*{\beta} = (\beta_1,\beta_2,\dots,\beta_N)^T.
\end{equation}
If $\Omega_0=0$, i.e.\ when the laser is switched off, by using Eq.~\eqref{eq:final_betaj} we get
\begin{align}
\begin{aligned}
\dot{P}(t)&= \dot{\vb*{\beta}}^\dagger(t) \vb*{\beta}(t)+\vb*{\beta}^\dagger(t) \dot{\vb*{\beta}}(t)\\
&=- \Gamma \vb*{\beta}^\dagger(t)S\vb*{\beta}(t)\,,
    \label{eq:Pev}
\end{aligned}
\end{align}
where the \textit{decay rate matrix} $S$, encoding information on the dissipative part of the atom cloud dynamics, is the $N\times N$ matrix with entries
\begin{equation}
S_{ij}=
\mathrm{sinc}\left(k_a \|\vb*{r}_i-\vb*{r}_j\|\right), \quad i,j=1,\ldots, N. 
 \label{eq:f_S}
\end {equation}
Hereafter, $\sinc(x)=(\sin x)/x$ for $x\neq0$, and $\sinc(0)=1$. 

The matrix $S$ defined in Eq.~\eqref{eq:f_S} is real symmetric ($S_{ij}=S_{ji}$) and with fixed trace ($S_{ii}=1$),
\begin{equation}\label{eq:trace}
    \operatorname{Tr}S=N.
\end{equation}  
It is also possible to show that $S$ is positive semi-definite. See Appendix~\ref{app:positive}. 
Therefore, the eigenvalues $\lambda_1,\ldots,\lambda_N$ of $S$ satisfy  $0 \leq \lambda_i \leq N$, and $(1/N) \sum_i\lambda_i=1$. In particular, $\lambda=1$ coincides with the decay rate $\Gamma$ of an isolated atom, while $0 \leq \lambda<1$ are associated with subradiant modes characterized by a slower decay rate, and $1<\lambda \leq N$ 
correspond to superradiant modes, decaying with a rate larger than $\Gamma$~\cite{spettroS}.

As a model of the atomic cloud, 
we assume that the random positions $\vb*{r}_i$'s of the $N$ atoms are independent and identically distributed according to a three-dimensional Gaussian probability density function with zero mean  and variance $\sigma^2>0$. 
Under this assumption, the matrix $S$ in~\eqref{eq:f_S} is a Euclidean random matrix (ERM): the matrix entries are given by a deterministic function $f(\|\vb*{r}_i-\vb*{r}_j\|)= \sinc(k_a \|{\vb*{r}_i-\vb*{r}_j}\|) $ of the Euclidean distances
between the random points $\vb*{r}_i$. See~\cite{parisi,Bogomolny03,Parisi06,skip_goetschy2} and references therein. The matrix $S$ of this paper (with different distribution on the positions $\vb*{r}_i's$) was also studied in~\cite{skip_goetschy}. It is worth noticing that the Gaussian shape provides a good model of a cold atomic cloud in the case of harmonic trapping at finite temperature and low density. Moreover, in many experiments, atoms are released from the trap in order to be probed, and the Gaussian velocity distribution transforms into a Gaussian spatial distribution after some time of flight.

The general problem of ERM theory is to understand the statistical properties of the eigenvalues
and the corresponding eigenvectors in the large-$N$ limit.
This is certainly a very rich and challenging problem. Many interesting results on the eigenvalue distributions are given in Refs.~\cite{spettroS,skip_goetschy2,skip_goetschy}. In the present work, we will start from characterizing the spectrum, and then focus on the microscopic statistics, analyzing level spacing and eigenvector properties. Note that the set of $N\times N$ real symmetric matrices is a linear space of dimension $N(N+1)/2$. The matrix $S$ is instead specified by the $3N$ coordinates of the $N$ atoms, and $3N\ll N(N + 1)/2$ for large $N$.

\section{Spectrum of the decay rate matrix $S$}

\label{sec:spectrum}

\subsection{General aspects}\label{sec:specs}

The main goal of our analysis is to determine the properties of the random matrix $S$ in the large-$N$ limit, i.e.\ in the asymptotic regime of large number of atoms ($N \approx 10^9$ atoms in experiments \cite{exp_sub,exp_sup}).
It is convenient to write the positions of the atoms as $\vb*{r}_i=\sigma\vb*{x}_i$, where  $\vb*{x}_1,\ldots,\vb*{x}_N$ are independent random variables distributed according to the three-dimensional standard Gaussian density 
\begin{equation}
p_X(\vb*{x})=\frac{1}{(2\pi)^{3/2}} e^{-x^2/2},\quad \vb*{x} \in \mathbb{R}^3, \quad x=\|{\vb*{x}}\|.
    \label{eq:gaussian}
\end{equation}
The average density of the atomic cloud is 
\begin{equation}
 \rho=\frac{N}{\sigma^3},
    \label{eq:rho}
\end{equation}
and, in terms of the dimensionless variables $\vb*{x}_i$'s, the matrix $S$ is
\begin{equation}
S_{ij}=  \sinc\left(\sqrt{M}\|{\vb*{x}_i - \vb*{x}_j}\| \right), \quad i,j=1,\ldots, N,
 \label{eq:S2}
\end {equation}
where 
\begin{equation}
  M=(k_a\sigma)^2
  \label{eq:Mdef}
\end{equation}
is a dimensionless parameter that plays a key role in the large-$N$ limit, since it represents an estimate of the number of independent transverse optical modes in the cloud.

We stress that the entries of the random matrix $S$ are \emph{not} independent, but Euclidean-correlated. The diagonal entries are nonrandom $S_{ii}=1$,  while the off-diagonal entries are random variables $0\leq \abs{S_{ij}}\leq 1$.

 If $\vb*{x}_i,\vb*{x}_j$ are two independent standard Gaussian points, the probability density of their distance $R=\|\vb*{x}_i-\vb*{x}_j\|$, $i\neq j$ is 
\begin{equation}
p_R(r)=\frac{1}{\sqrt{4\pi}}r^2e^{-r^2/4},\quad r\geq0.
\label{eq:density_R}
\end{equation}

If $\vb*{x}_i,\vb*{x}_j,\vb*{x}_k,\vb*{x}_l$ are distinct, then the distances $R_{ij}=\|\vb*{x}_i-\vb*{x}_j\|$ and $R_{kl}=\|\vb*{x}_k-\vb*{x}_l\|$ are independent. Therefore, the matrix entries $S_{ij}$ and $S_{kl}$ are independent unless they share one index, while entries of $S$ that share at least one index (e.g. entries on the same column/row) are Euclidean-correlated. For instance $S_{ij}$ and $S_{il}$, with $i,j,l$ all distinct, depend on the interatomic distances $R=\|\vb*{x}_i-\vb*{x}_j\|$ and $R'=\|\vb*{x}_i-\vb*{x}_l\|$ whose joint density is \emph{not} factorized and reads
\begin{equation}
  p_{R,R'}(r,r')=\frac{2}{\sqrt{3} \pi } r e^{-\frac{r^2}{3}} r' e^{-\frac{r'^2}{3}} \sinh \left(\frac{r r'}{3}\right),\quad r,r'\geq0.
  \label{eq:density_RR'}
\end{equation}
 For a proof of~\eqref{eq:density_R} and~\eqref{eq:density_RR'}, see Appendix~\ref{app:joint}.  
\par

One of the most peculiar features of the matrix $S$ in~\eqref{eq:S2} is that its eigenvalues strongly depend on the parameter $M=(k_a\sigma)^2$. This dependence originates from the nonlinearity of the $\sinc$ function. This can be anticipated by looking at the statistical properties of the off-diagonal \emph{entries} of $S$. 
Using the interatomic distance density function~\eqref{eq:density_R} we can compute the moments of $S_{ij}$, $i\neq j$, and their large-$M$ asymptotics,
\begin{equation}
\langle S_{ij}\rangle=e^{-M}
,\quad
\langle S_{ij}^2\rangle=e^{-2M}\frac{\sinh(2M)}{2M}{\sim} \frac{1}{4M},
\label{eq:mom12}
\end{equation}
where the average $\langle\ldots \rangle$ is taken according to the distribution (\ref{eq:gaussian}).
On the other hand, \emph{all} moments for $m\geq3$ scale as $\langle S_{ij}^m\rangle{\sim}a_mM^{-3/2}$, for large $M$, 
where the $a_m$'s are explicit constants (see Appendix~\ref{app:mom}).
Using the joint density~\eqref{eq:density_RR'} we find instead that, for $i,j,l$ distinct,
\begin{equation}
\langle S_{ij}S_{il}\rangle=e^{-2M}\frac{\sinh(M)}{M}\sim \frac{e^{-M}}{2M}.
\end{equation}

Following previous literature~\cite{exp_sub,exp_sup,b0}, we introduce the \textit{cooperativeness parameter}
\begin{equation}
 b_0=\frac{N}{M} = \frac{N}{(k_a \sigma)^2},
    \label{eq:b0}
\end{equation}
that can be thought of, in the large-$N$ limit and for a large number of modes in the cloud, as the number of atoms that can coherently interact to produce cooperative decay.

\subsection{Limiting cases for fixed $N$}\label{sec:fixedN}

For fixed size $N$, we have that when $M\to\infty$ (small cooperativeness parameter $b_0\ll1$), the matrix elements tend to $S_{ij}\to\delta_{ij}$. Hence, the matrix $S$ tends to the $N\times N$ identity matrix $I$. In this limit, the atoms are too far apart to cooperate in photon emission, and decay independently with the rate $\Gamma$ of an isolated atom.

 On the other hand, the limit $M\to0$ (large cooperativeness parameter $b_0\gg1$) describes $N$ atoms separated by a small distance compared to the emission wavelength. This is the so called \textit{Dicke limit}~\cite{Dicke}, with $N-1$ subradiant modes and only one supperradiant symmetric state. 
 This behaviour immediately follows from the limit $S_{ij}\to 1$, as $M\to0$. Equivalently, the matrix $S$ tend to the $N\times N$  rank-$1$ matrix with all ones, 
whose only non-vanishing eigenvalue $\lambda=N$ corresponds to the superradiant mode, 
$\vb*{\beta} = (1/\sqrt{N})(1,1,\dots,1)^T$, namely the symmetric combination of atomic excitations
\begin{equation}
\ket{\psi^+}= \frac{1}{\sqrt{N}} \sum_{j=1}^N \ket{j} 
\label{eq:symm}\,.
\end{equation}
The $(N-1)$-degenerate subradiant eigenspace corresponding to the zero eigenvalue is composed of states in the form
\begin{equation}\label{eq:psim}
\ket{\psi^-}= \sum_{j=1}^N \beta_j \ket{j} \quad \text{with } \sum_{j=1}^N \beta_j = 0 ,
\end{equation}
orthogonal to the superradiant mode, where the excitation amplitude are combined in such a way to produce destructive interference of the emitted photon, thus hindering decay.

\subsection{Large-$N$ limit with fixed $M$}\label{sec:fixedM}
Suppose now that we wish to analyse the spectrum of the random matrix $S$ as $N\to\infty$ with $M$ fixed. This limit is well understood mathematically. From general results~\cite[Theorem 1]{Jiang15} it follows that all but a negligible fraction of the eigenvalues of $S$ converge to $0$, almost surely. In this limit most of the eigenvalues are close to zero. However, since $\Tr S=N$, it is natural to ask whether by centring and rescaling the eigenvalues of $S$ one can get a nontrivial limit. 
Indeed this is the case and a finer analysis~\cite[Theorem 3.1]{Koltchinskii} shows that the spectrum of $(S-I)/N$ is asymptotically close, as $N\to\infty$, to the spectrum of the \emph{integral operator} $\mathcal{S}$ defined by the formula
\begin{equation}
\mathcal{S}g(\vb*{y})=\int_{\mathbb{R}^3}\sinc\left(\sqrt{M}\|\vb*{x}-\vb*{y}\|\right)g(\vb*{x})p_X(\vb*{x})d\vb*{x}.
    \label{eq:mathcalS}
\end{equation} 
where $p_X$ is defined in Eq.\ (\ref{eq:gaussian}).
The (non-random) operator $\mathcal{S}$ is Hilbert-Schmidt and self-adjoint. Hence, its nonzero eigenvalues have finite multiplicities and form a real sequence convergent to $0$.
Notice that the condition $N\to\infty$, with $M$ fixed, is a high-density limit $\rho\sim N$.

\subsection{Large-$N,M$ limit with fixed cooperativeness parameter}\label{sec:fixedb0}

In order to describe the spectral properties of $S$, we consider the normalized (mean) \emph{eigenvalue density} 
\begin{equation}
p(\lambda)=\left\langle\frac{1}{N}\sum_{i}\delta(\lambda-\lambda_i)\right\rangle,
\end{equation}
 where $\lambda_1,\ldots,\lambda_N$ are the eigenvalues of $S$. The moments of  $p(\lambda)$ are given by
 \begin{equation}
 \langle\lambda^m\rangle=\int_{0}^{+\infty}\lambda^mp(\lambda)d\lambda=\frac{1}{N}\langle\operatorname{Tr}S^m\rangle.
 \end{equation}
 When $m=1$,  from the identity $\frac{1}{N}\operatorname{Tr}S=1$ we get that $p(\lambda)$ has mean $\langle \lambda\rangle=1$ for all values of $N,M$.

 Using~\eqref{eq:mom12} we find for the second moment  
 \begin{align}
\langle\lambda^2\rangle&= \frac{1}{N}\langle\Tr S^2\rangle=\frac{1}{N}\sum_{i,j}  \langle S_{ij}^2\rangle \nonumber \\
&=1+(N-1)e^{-2M}\frac{\sinh(2M)}{2M}\stackrel{N,M\to\infty}{\sim} 1+\frac{N}{4M}.
\end{align}
We see that, if we want to let $N\to\infty$ and keep the second moment finite, we must take $M\to\infty$  with $b_0=N/M$ fixed, so that
  \begin{equation}
  \label{eq:2ndmomlim}
 \langle\lambda^2\rangle\to1+\frac{b_0}{4}.
 \end{equation}
 Remarkably, $b_0$ is also the relevant parameter from a \emph{physical} point of view, see Eq.\ (\ref{eq:b0})
 and Refs.\ \cite{exp_sub,exp_sup,b0}. In principle, all moments of the eigenvalue density $p(\lambda)$  can
be computed with this kind of approach, although the combinatorics is rather difficult. 
 
For a numerical demonstration of the correctness of this asymptotic regime $N,M\to\infty$, with $N/M=b_0$ fixed, we invite the reader to have a glance at Fig.~\ref{fig:sovrapposti} where we compare the eigenvalue distribution obtained by numerically diagonalization of samples of $S$ of size $N=300,1000,10000$, with $b_0=1$. We see that the histograms overlap, thus confirming the convergence of the eigenvalue density in the chosen scaling limit.

In the following we will study the spectral properties of the ERM $S$, by focusing
on this most interesting asymptotic regime obtained by the \emph{simultaneous} limit $N,M\to\infty$ with the ratio
\begin{equation}
 b_0=\frac{N}{M}
    \label{eq:b0_2}
\end{equation}
kept fixed. In other words, we set $\sqrt{M}=\sqrt{N/b_0}$ in~\eqref{eq:S2}, namely,
\begin{equation}
S_{ij}=  \sinc\left(\sqrt{\frac{N}{b_0}}\|{\vb*{x}_i - \vb*{x}_j}\| \right), \quad i,j=1,\ldots, N,
 \label{eq:Sb0}
\end {equation}
and investigate the limit $N\to\infty$ for various values of $b_0$.

It is worth noticing that condition~\eqref{eq:b0_2} amounts to say that $\sigma= M^{1/2}/k_a\sim N^{1/2}$, so that the  density $\rho$ in Eq.~\eqref{eq:rho} scales as $N^{-1/2}$, which is a low-density limit. This should be compared with the standard thermodynamic limit $N\to\infty$ at fixed density, that would instead require a stronger confinement of the atomic cloud, with $\sigma\sim N^{1/3}$.

\begin{figure}[htp]
\centering
\includegraphics[width=.8\columnwidth]{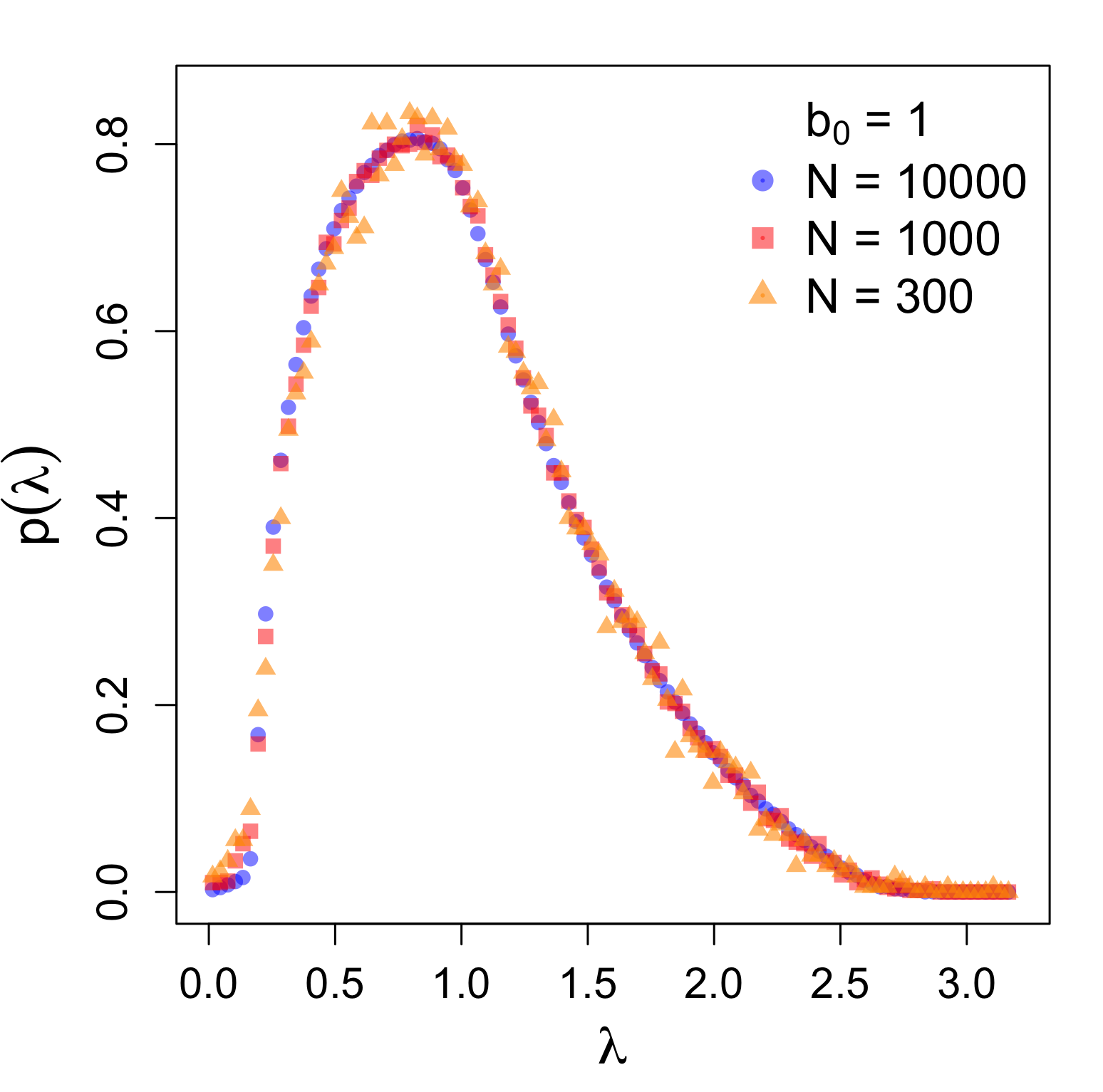}
\caption{Eigenvalue density of $S$  with $b_0=1$ fixed for different matrix sizes $N$.}
\label{fig:sovrapposti}
\end{figure}

\section{The asymptotic eigenvalue density}

\label{sec:smallb0}
\begin{figure*}
\centering
\subfigure[]{\includegraphics[trim={0 0 0 1cm}, clip,width=0.5\columnwidth]{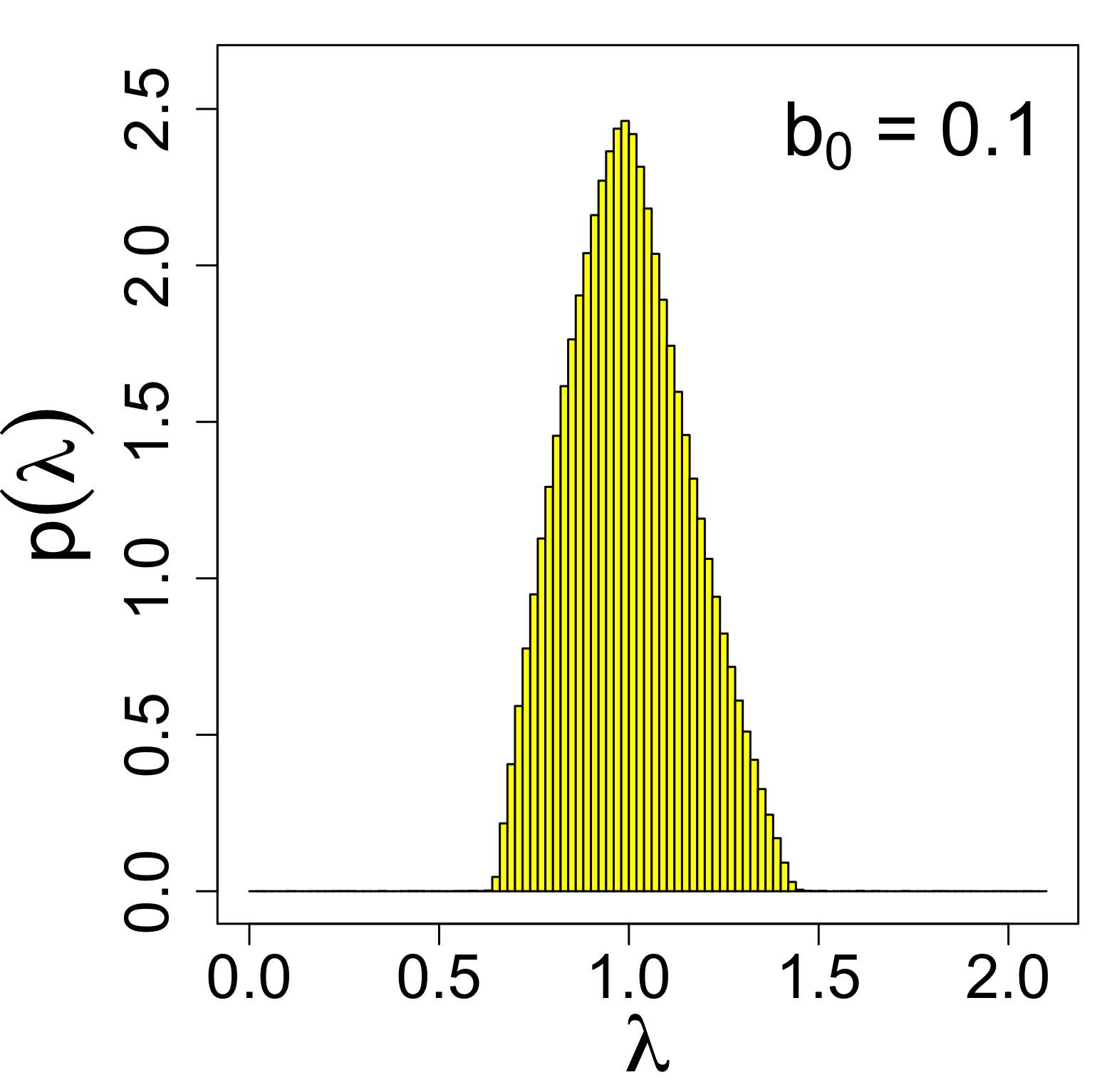}\label{fig:b0_1}}
\subfigure[]{\includegraphics[trim={0 0 0 1cm}, clip,width=0.5\columnwidth]{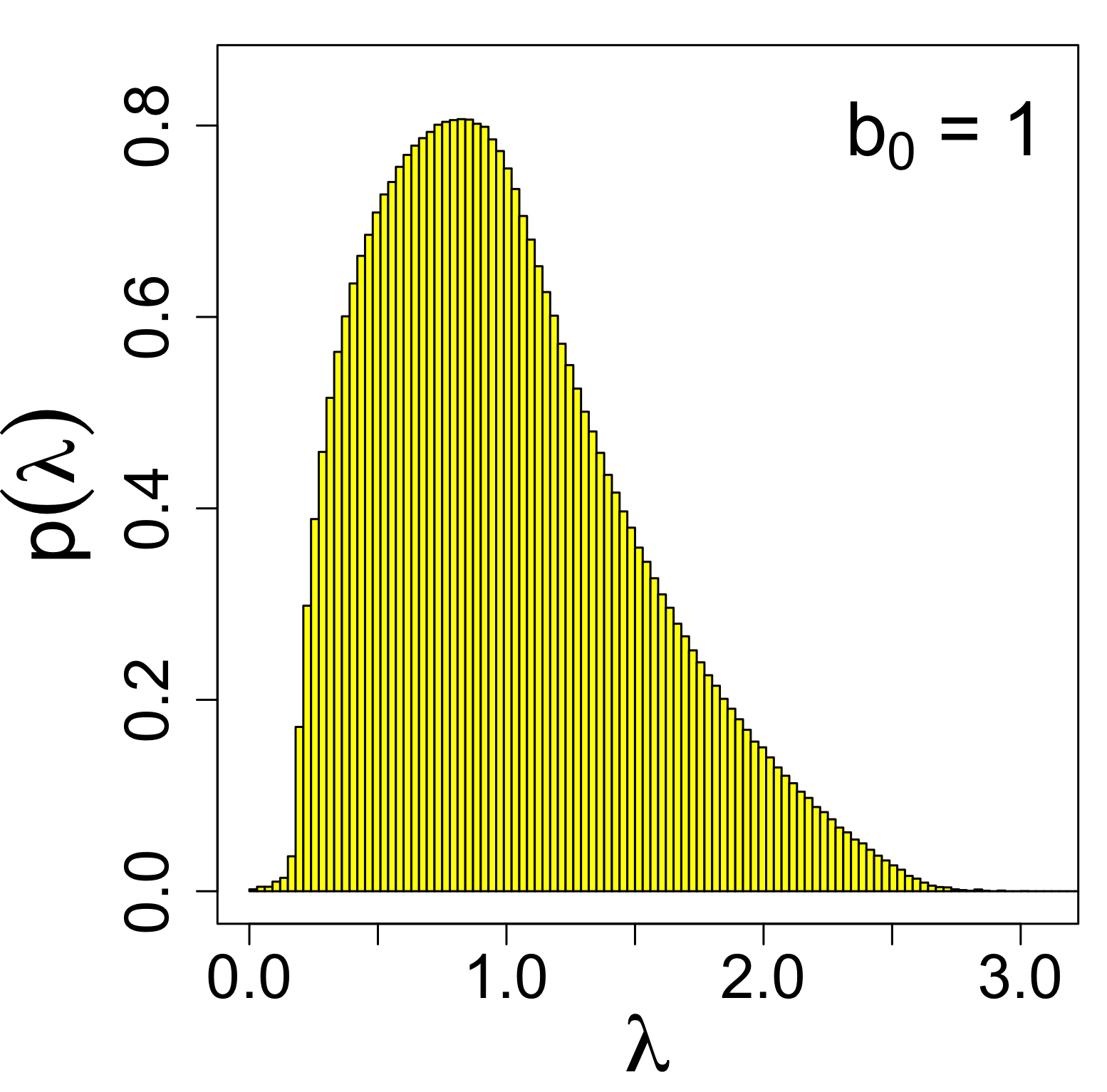}\label{fig:b1}}
\subfigure[]{\includegraphics[trim={0 0 0 1cm}, clip, width=0.5\columnwidth]{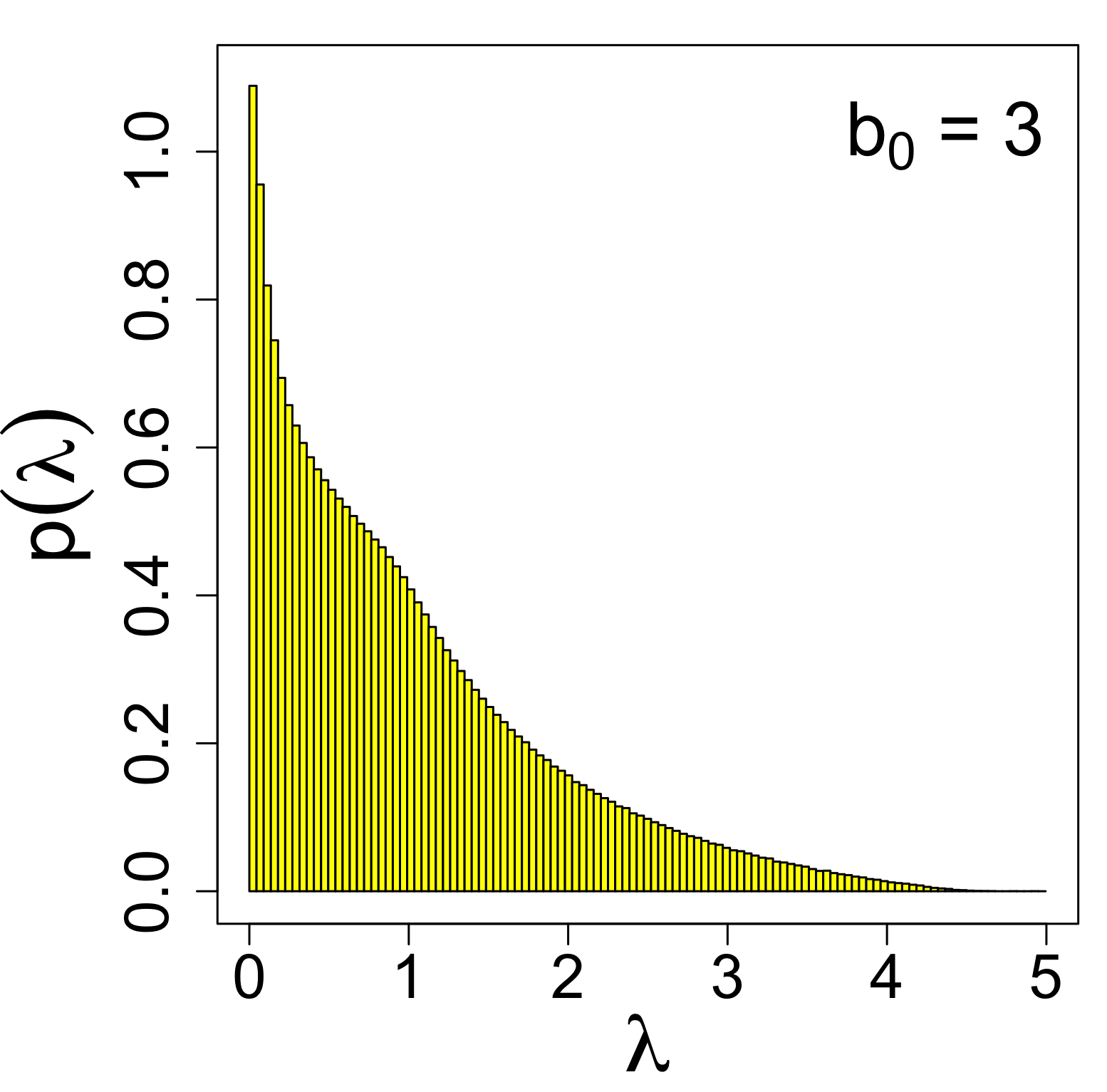}\label{fig:b3}}
\subfigure[]{\includegraphics[trim={0 0 0 1cm}, clip,width=0.5\columnwidth]{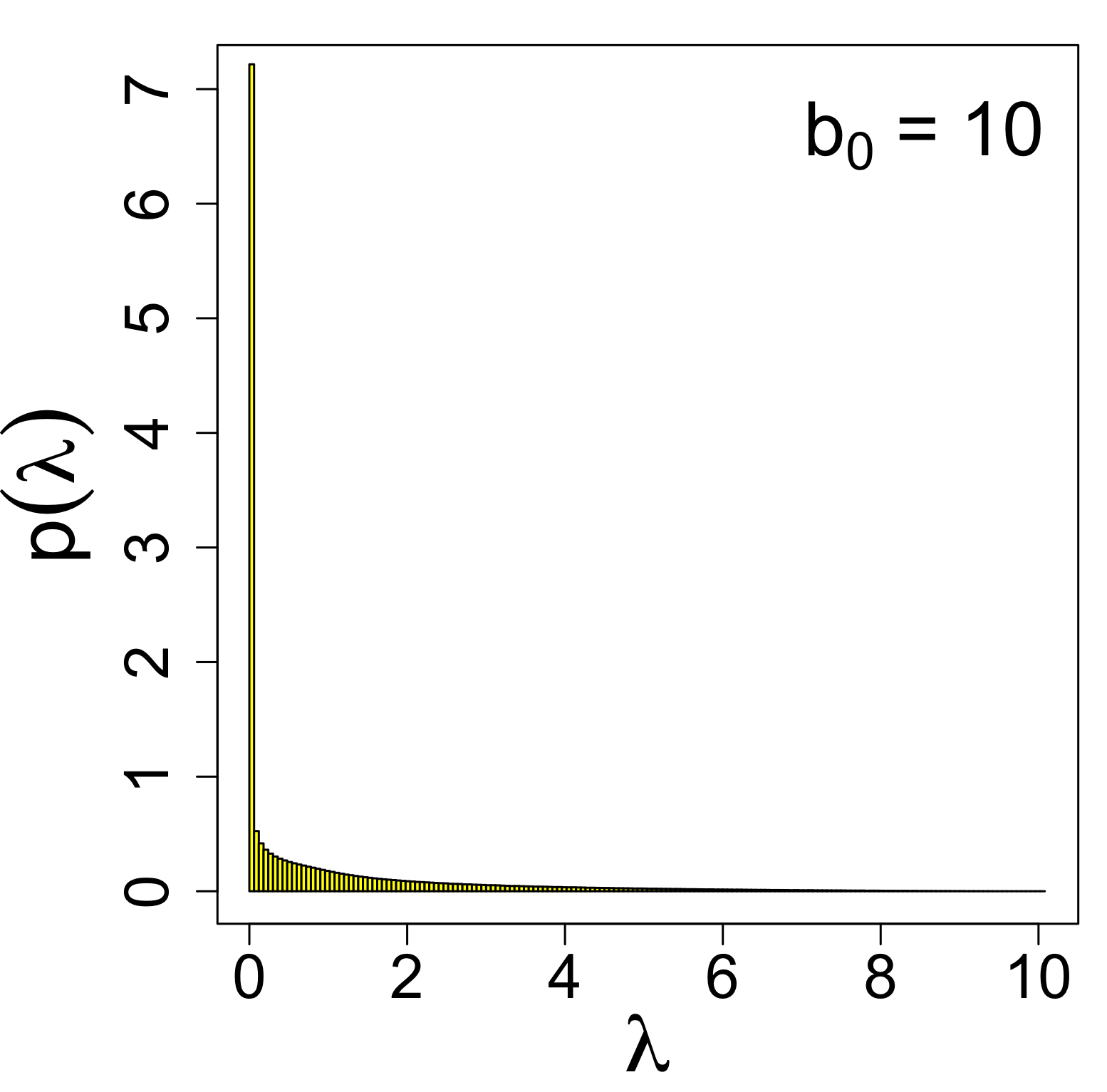}\label{fig:b10}}\\
\subfigure[]{\includegraphics[trim={0 0 0 1cm}, clip,width=0.5\columnwidth]{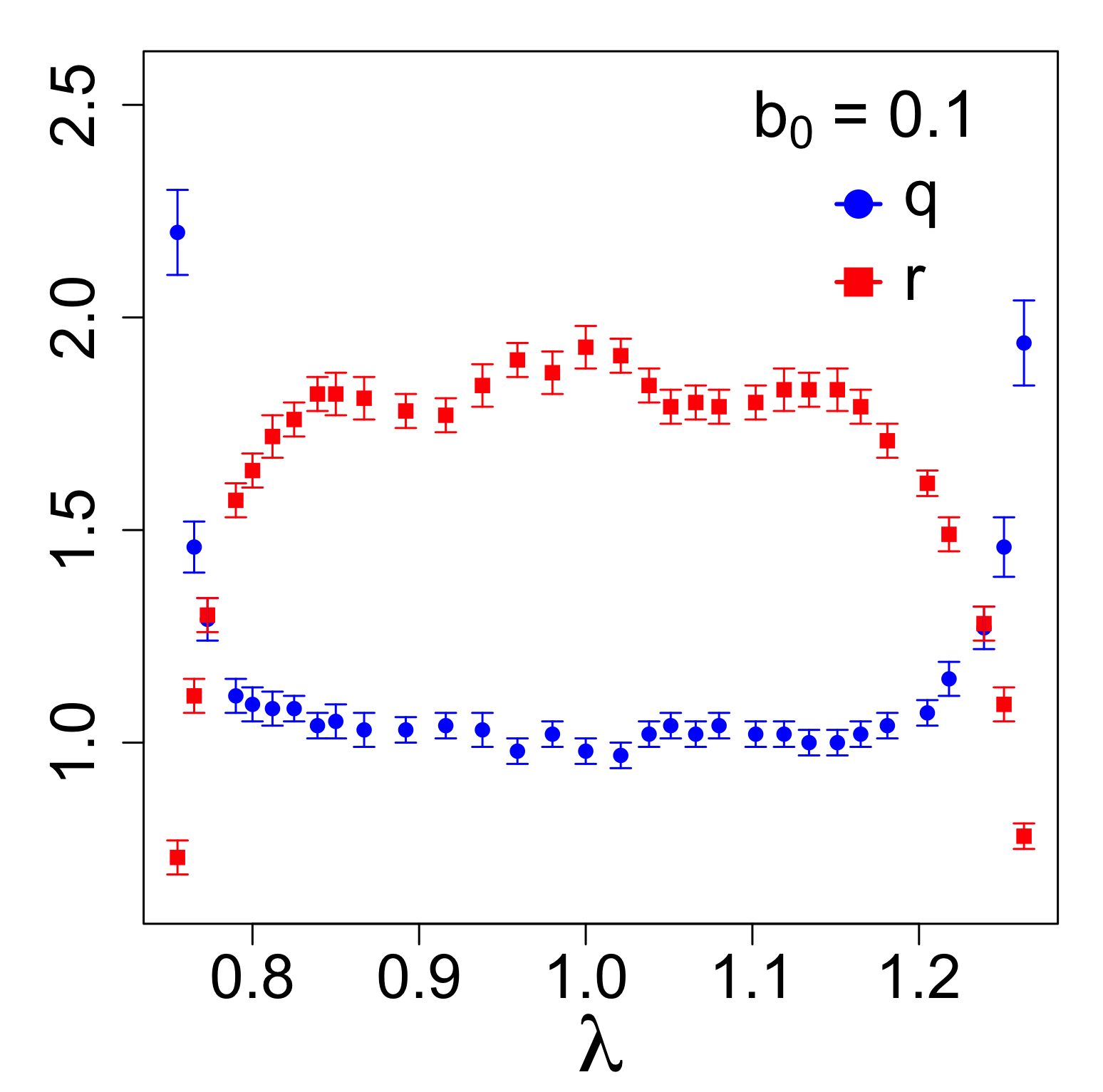}\label{fig:b0_1_sp}}
\subfigure[]{\includegraphics[trim={0 0 0 1cm}, clip,width=0.5\columnwidth]{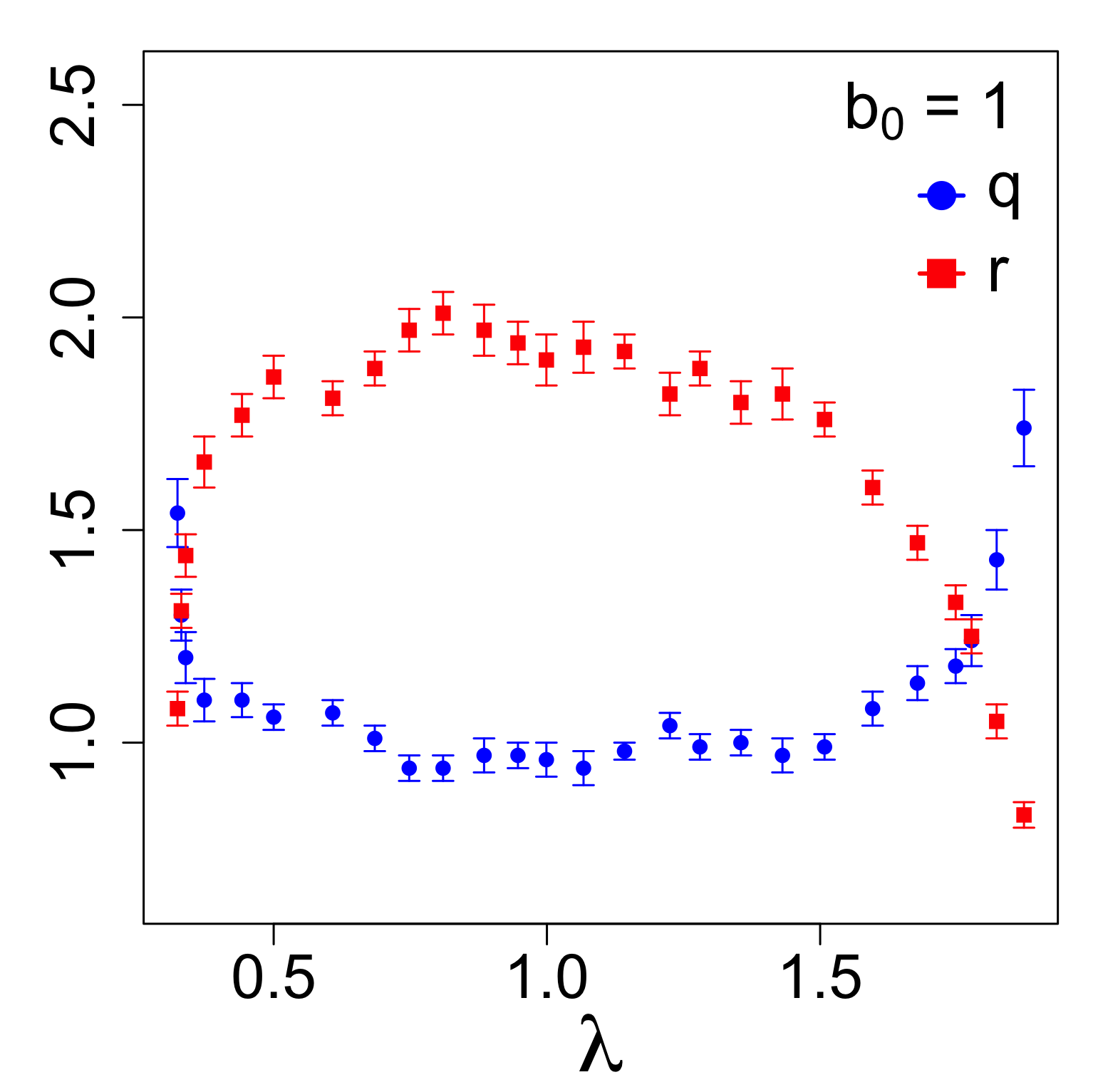}\label{fig:b1_sp}}
\subfigure[]{\includegraphics[trim={0 0 0 1cm}, clip,width=0.5\columnwidth]{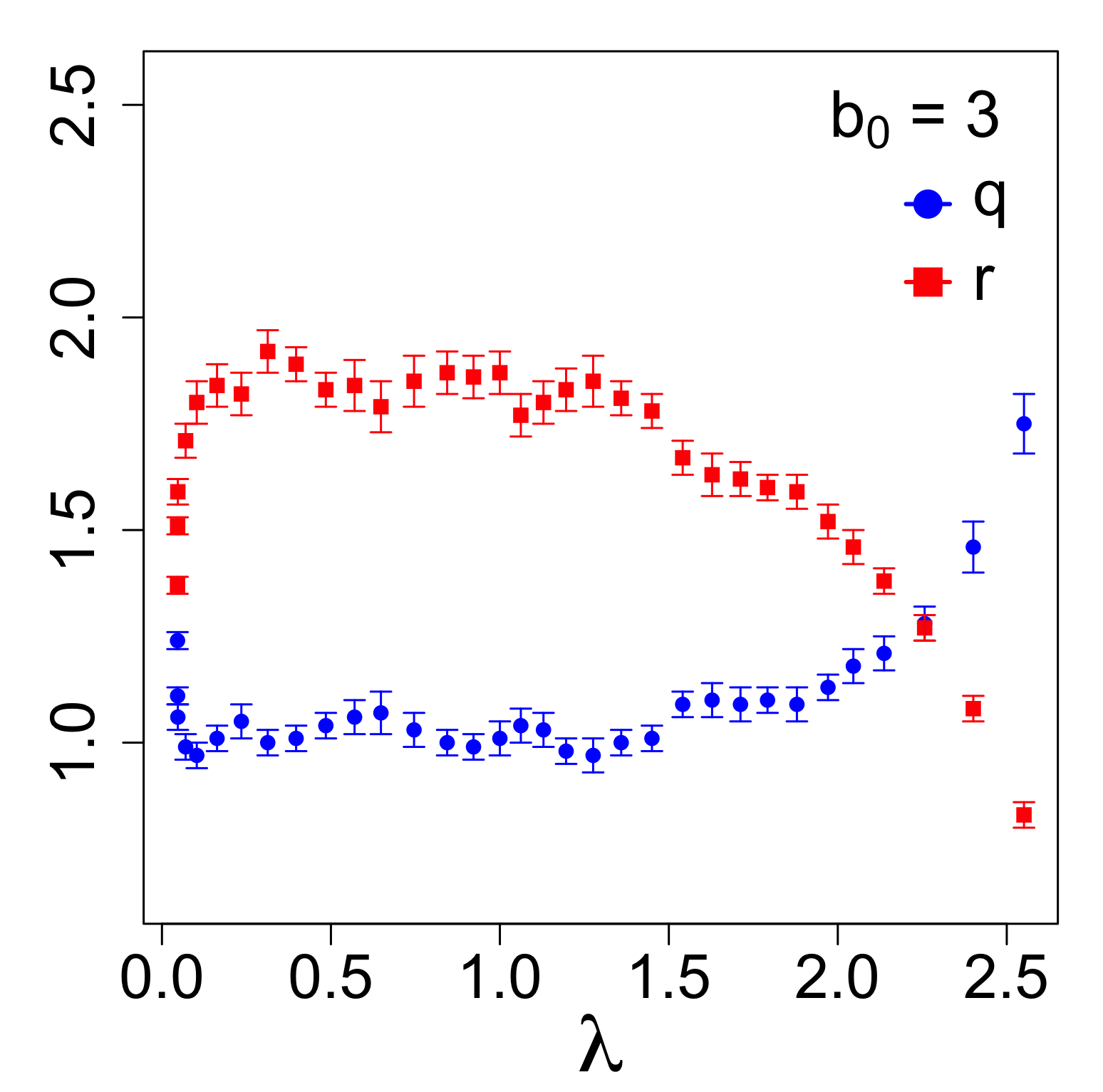}\label{fig:b3_sp}}
\subfigure[]{\includegraphics[trim={0 0 0 1cm}, clip,width=0.5\columnwidth]{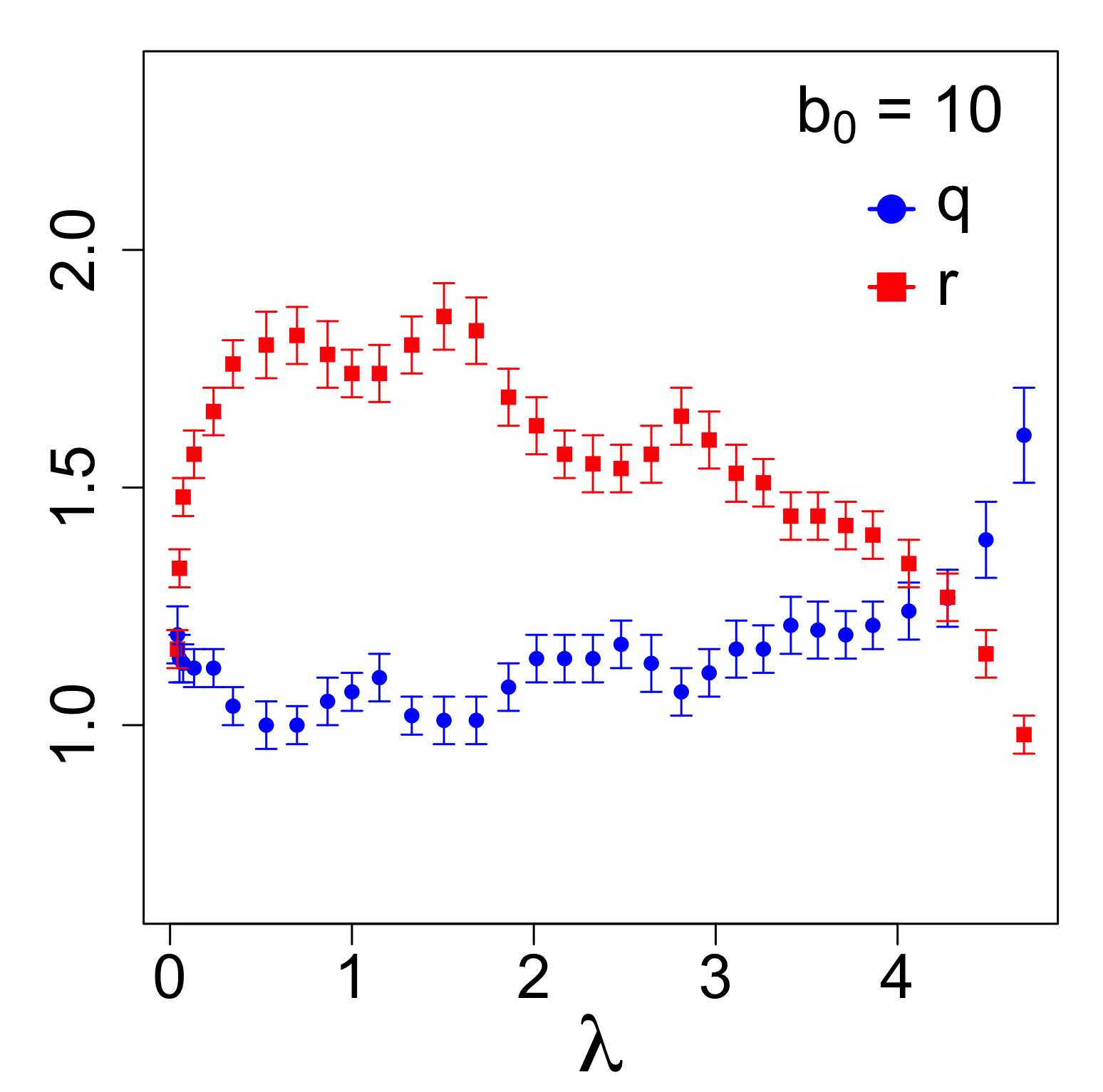}\label{fig:b10_sp}}
\caption{Top: Histograms of the eigenvalue density $p(\lambda)$ obtained from 20 samplings of the matrix  $S$ defined in Eq.~\eqref{eq:Sb0} for different values of the cooperativeness parameter $b_0$. In all simulations $N=10000$.
Bottom: Best-fit parameters $q$ and $r$ of the Wigner-like surmise for the spacings, $ p(q,r;s)=a s^q \, e^{-b s^{r}}$ defined in \eqref{eq:2par}, obtained for different regions of the spectrum containing 1000 eigenvalues (600 eigenvalues for $b_0=10$ in Fig.\ \ref{fig:b10_sp}) centered around $\lambda$. Notice that $q\simeq1$ in the bulk, corresponding to level repulsion.}
\label{fig:eigenvalues}
\end{figure*}

In what follows we study the spectral properties of $S$ as the cooperativeness parameter $b_0$ varies in order to unveil cooperative effects in different regimes. We will find that  large values of $b_0$ enhance the super- and subradiant part of the spectrum of $S$, whereas when $b_0\to0$ the spectrum concentrates at $\lambda=1$. This is in accord with physical intuition.

For some representative values of the cooperativeness parameter $b_0$ we have sampled random matrices $S$ with size $N=10000$. For each value of $b_0$ we performed 20 realizations. The mean eigenvalue density $p(\lambda)$ is estimated by the histogram of the eigenvalues of $S$ obtained by numerical diagonalization,  see Fig.~\ref{fig:eigenvalues}. All histograms are normalized and have unit mean. 

Figure~\ref{fig:b0_1} shows that for small values of the cooperativeness parameter $b_0$ the spectrum is strongly peaked around $\lambda=1$. 
Figs.~\ref{fig:b1}, \ref{fig:b3} and~\ref{fig:b10} show that, as $b_0$ increases, the spectrum of $S$ broadens as expected:  cooperative effects kick in and cause deviations from the isolated-atom decay rate. 
Since the eigenvalues are non-negative and have unit mean, it follows that as $b_0$ increases the shape of the spectrum becomes increasingly asymmetrical due to the accumulation of eigenvalues in the subradiant sector $\lambda <1$. 
Figure~\ref{fig:b10} clearly shows that the fundamental features of the spectrum for large values of $b_0$ are the accumulation of the eigenvalues close to zero and the presence of few large superradiant eigenvalues.

As noted above, most of the techniques developed in random matrix theory for matrices with independent entries or unitarily invariant distributions, are not directly applicable to Euclidean random matrices. Therefore, computing the large-$N$ limit of the mean eigenvalue density of $S$ for generic values of $b_0$ is a challenging open problem. 

We now analyze the eigenvalue density $p(\lambda)$ of $S$ in~\eqref{eq:Sb0} for $b_0\ll1$,
\begin{equation}
\label{eq:plambdab0}
\lim_{b_0\to0}\lim_{N \to\infty}p(\lambda).
\end{equation}
We stress the fact that the two limits do \emph{not} commute. The precise description of~\eqref{eq:plambdab0} is somewhat lengthy, and goes beyond the scope of this paper. We found however a simpler and neat approximation that we now present.
The shape of the histograms for  small values of $b_0$, see Fig.~\ref{fig:trianglefit}, suggests that the spectrum could be described by a triangular density around the single-atom emission peak at $\lambda=1$. 

Indeed, for small $b_0$ one can imagine that $S$ in~\eqref{eq:Sb0} can be approximated by a direct sum of $2\times2$ matrices of the form 
\begin{equation}
\begin{pmatrix}
1 & \sinc(\sqrt{N/b_0} \|\vb*{x}-\vb*{x'}\|)\\
\sinc(\sqrt{N/b_0} \|\vb*{x'}-\vb*{x}\|)&1
\end{pmatrix} .
    \label{eq:Aelements}
\end{equation}
Each $2\times 2$ block has eigenvalues of the form $1\pm \sinc(\sqrt{N/b_0} \|\vb*{x'}-\vb*{x}\|)$ (see Ref.~\cite{skip_goetschy} for an analogous approach). If the blocks were independent, the eigenvalues of $S$ would take the form $\lambda=1+\xi$, where $\xi=c_++c_-$ is the sum of two independent random variables $c_+$ and $c_-$ uniformly distributed in $[0,a]$ and $[-a,0]$ respectively, with $a$ proportional to $\sqrt{b_0}$. The corresponding distribution would be a normalized symmetric triangular density $a^{-1} p_{\Delta}((x-1)/a)$, centred at $\lambda=1$ and with base $[1-a,1+a]$, where
\begin{equation}
p_{\Delta}(x)=
\begin{dcases} 1-|x| & \text{if } x \in [-1,1] \\
0 &  \text{otherwise} 
\,,
\end{dcases}
 \label{eq:triangle0}
\end{equation}	
namely,
\begin{equation}
\frac{1}{a} p_{\Delta}\Bigl(\frac{x-1}{a}\Bigr)=
\begin{dcases} \frac{a-\abs{x-1}}{a^2} & \text{if } x \in [1-a,1+a] \\
0 &  \text{otherwise} 
\,.
\end{dcases}
 \label{eq:triangle1}
\end{equation}	

\begin{figure}
\centering
\includegraphics[width=.7\columnwidth]{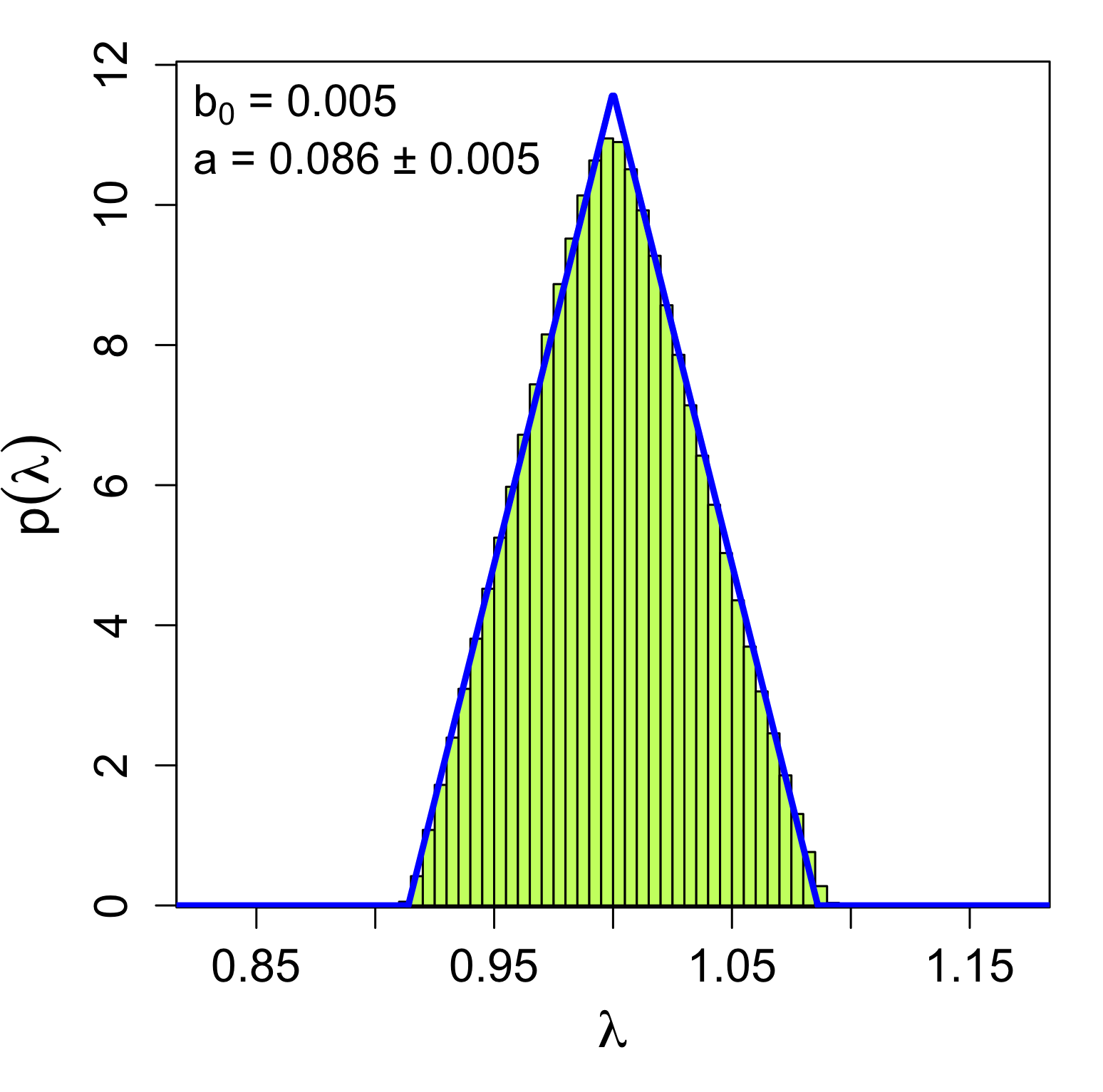}\\
\includegraphics[width=.7\columnwidth]{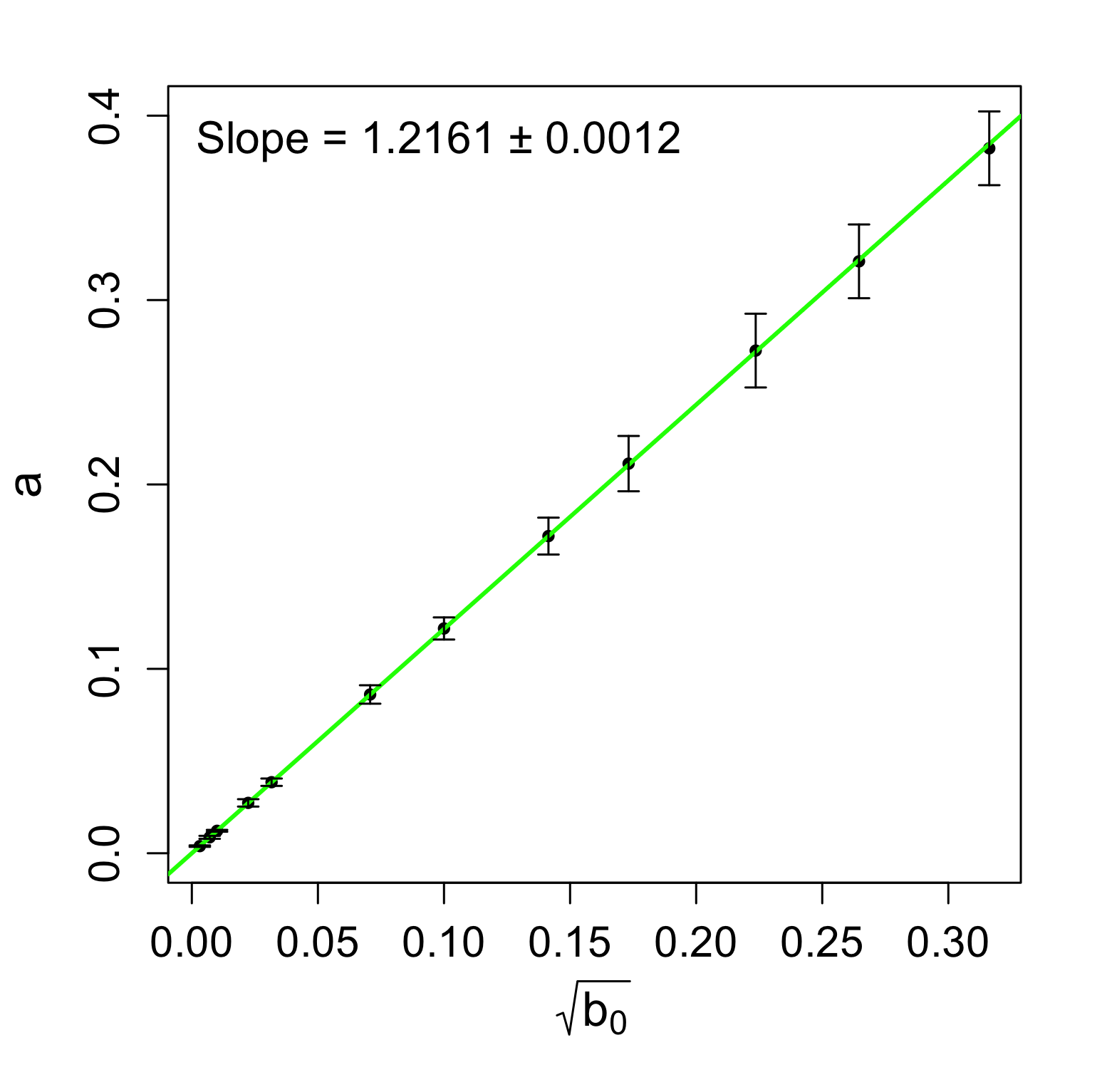}
\caption{Top: Histogram of the normalized eigenvalue distribution $p(\lambda)$ obtained from 20 samplings of the matrix $S$ in Eq.\ \eqref{eq:Sb0}, with $N=10000$ and $b_0=0.005$ (notice that there are few eigenvalues outside the range of the horizontal axis that have not been plotted). 
The histogram has been fitted with the one-parameter family of triangular distributions~\eqref{eq:triangle1}, yielding best-fit parameter $a=0.086 \pm 0.005$ (blue line). The error of $a$ has been set equal to the bin size of the histogram. Bottom: Parameter $a$ of the triangular distribution~\eqref{eq:triangle1} as a function of $\sqrt{b_0}$. 
}
\label{fig:trianglefit}
\end{figure}
The value of $a$ can be chosen by imposing the matching condition of the second moment of~\eqref{eq:triangle1} ,
\begin{equation}	
\int \frac{x^2}{a} p_{\Delta}\Bigl(\frac{x-1}{a}\Bigr)dx=1+\frac{a^2}{6},
\end{equation}	
with the limiting second moments~\eqref{eq:2ndmomlim} of $p(\lambda)$. This gives
\begin{equation}
a=\sqrt{\frac{3b_0}{2}}. 
    \label{eq:a_estimate}
\end{equation}
Equation~\eqref{eq:a_estimate}  relates the support of the eigenvalue density and $b_0$, for $b_0\ll1$. 
We checked numerically that indeed $a$ scales linearly with  $\sqrt{b_0}$, 
for small values of $b_0$. See  Fig.~\ref{fig:trianglefit}. The (weighted) linear fit yields an offset $(2.096 \pm 1.672) \times 10^{-5}$, compatible with zero, and a slope $1.2161 \pm 0.0012$, to be compared with the theoretical value $\sqrt{3/2} \approx 1.2247$ in Eq.~\eqref{eq:a_estimate}.

The agreement between the triangular density in Eq.~\eqref{eq:triangle1} and $p(\lambda)$ for $b_0 \ll 1$ is surprisingly good, given the simple assumptions underlying the derivation. However, it is possible  to show that the triangular density is only an approximation, and 
\begin{equation}
\label{eq:plambdab0_v}
\lim_{b_0\to0}\lim_{N \to\infty}p(\lambda)\neq   \frac{1}{a} p_{\Delta}\Bigl(\frac{\lambda-1}{a}\Bigr).
\end{equation}
In order to simplify the analysis that follows, it is convenient to centre and rescale the matrix $S$ in~\eqref{eq:Sb0} as
\begin{equation}
Q= \sqrt{\frac{2}{3b_0}}(S-I) .
    \label{eq:Q}
\end{equation}
Notice that the matrix $Q$ has vanishing elements on the diagonal.
 We should compare the eigenvalue density of $Q$ for $N \to + \infty$ and small values of $b_0$ with the centred triangular density $p_{\Delta}(x)$ of Eq.~\eqref{eq:triangle0}. We have that
 \begin{equation}
 \lim_{b_0\to0}\lim_{N\to\infty}\frac{1}{N}\langle\Tr Q^m\rangle= \int \lambda^m p_{\Delta}(\lambda),
 \end{equation}
 for $m=0,1,2$. We also verified the identity for $m=3$. We now show that the matching condition \emph{fails} for the fourth moment, $m=4$. The fourth moment of the triangular distribution is \begin{equation}
\int  \lambda^4 \, p_{\Delta}(\lambda)d\lambda=\frac{1}{15}= 0.0667\ldots.
    \label{eq:triangle_4mom}
\end{equation}

On the other hand, the fourth moment of $Q$ is 
\begin{align}
\begin{aligned}
\frac{1}{N}\langle \Tr Q^4\rangle&=\frac{1}{N}\sum_{ijkl }\langle Q_{ij}Q_{jk}Q_{kl}Q_{li}\rangle \\
&=\frac{1}{N}\biggl(\frac{2}{3b_0}\biggr)^2 {\sum_{ijkl}}'  \langle S_{ij}S_{jk}S_{kl}S_{li}\rangle\, ,
    \label{eq:4mom}
    \end{aligned}
\end{align}
where$\sum'$ denotes a sum over distinct consecutive indices $i\neq j\neq k\neq l\neq i$. There are three possibilities for the indices $i,j,k,l$:
\begin{enumerate}
    \item there are two pairs of equal indices (i.e.\ $i=k$ and $j=l$);
    \item there is one pair of equal indices (this contribution has multiplicity 2);
    \item all indices are different.
\end{enumerate}
Therefore, taking into account the symmetries of $S$, Eq.~\eqref{eq:4mom} can be written as the sum of  three contributions,
\begin{align}
\begin{aligned}
\frac{\langle \Tr Q^4 \rangle}{N}&=\frac{4}{9b_0^2 N} \biggl[N(N-1)\langle S_{ij}^4\rangle \\
&+2N(N-1)(N-2)\langle S_{ij}^2 S_{il}^2 \rangle \\
&+N(N-1)(N-2)(N-3)\langle S_{ij} S_{jk}S_{kl}S_{li} \rangle\biggr]
    \label{eq:4momQ}\,.
\end{aligned}
\end{align}

The expectation values in Eq.~\eqref{eq:4momQ} are difficult to compute for fixed $N$ and $b_0$, but the calculations become simpler in the large-$N$ limit. Indeed, from~\eqref{eq:density_R} we have
\begin{equation}
\langle S_{ij}^4 \rangle=O((b_0/N)^{3/2})
    \label{eq:Es^4}\,,
\end{equation}
so that the first term in Eq.~\eqref{eq:4momQ} vanishes asymptotically. Furthermore, it can be shown numerically that $\langle S_{ij}S_{jk}S_{kl}S_{li} \rangle$ scales like $b_0^3/N^3$, so that the third term in Eq.~\eqref{eq:4momQ} in the large-$N$ limit scales as $b_0$, and hence vanishes in the case $b_0 \to 0$ under consideration.  As a consequence, Eq.~\eqref{eq:4momQ} reduces to
\begin{equation}
\frac{1}{N}\langle \Tr Q^4\rangle\sim \frac{8N^2}{9b_0^2}\langle S_{ij}^2 S_{il}^2 \rangle
    \label{eq:4momQlimit}\,.
\end{equation}
In order to compute the above expectation value, we use  the joint probability density function~\eqref{eq:density_RR'}
and we get
\begin{align}
\lim_{b_0\to0}\lim_{N\to\infty}\frac{1}{N}\langle \Tr Q^4\rangle&\sim\frac{8}{9}\,\frac{2}{\pi\sqrt{3}}\,\sum_{j=0}^{+\infty}\frac{1}{(2j+1)!\,3^{2j+1}}\nonumber\\&\times\left( \frac{1}{2}\int_{0}^{+\infty}r^{2j}\,e^{-r^2/3}dr\right)^2\nonumber\\
    &=\frac{\pi}{27\sqrt{3}}= 0.0672\ldots.
    \label{eq:4mom_result}
    \end{align}
This value should be compared with the fourth moment~\eqref{eq:triangle_4mom} of the triangular density. The discrepancy between Eqs.~\eqref{eq:4mom_result} and~\eqref{eq:triangle_4mom} is quite small, approximately $0.7\%$, corroborating (and correcting at the same time) the triangular approximation for $N\to +\infty$ and $b_0 \to 0$. Considering the simplicity of the triangular density, the approximation is useful as a quick benchmark against numerical data.

\section{Microscopic statistics}

\label{sec:NNSD}

In the statistical approach to complex deterministic quantum systems, the guiding idea is that some central quantum features (energy levels, decay rates, etc.), are so erratic and irregular that their precise values are inconsequential. It is only their statistical properties that hold significance~\cite{Wigner2}. As matrices are intrinsic to quantum mechanics, random matrices are of paramount importance in the application of statistics to quantum problems.

In the field of quantum chaos it was soon realized that the statistics of the
microscopic behaviour of the energy levels of a quantum system could help discriminating between systems whose classical counterpart is integrable or chaotic. 

For generic integrable models, the energy levels follow the distribution of eigenvalues of diagonal matrices with independent identically distributed (i.i.d.) diagonal entries. This implies that their correlation functions, after unfolding, coincide with the ones of a Poisson process.  This is the celebrated Berry-Tabor conjecture \cite{congettura,proof_congettura}. 

On the other hand, according to the Bohigas-Giannoni-Schmit conjecture \cite{BGS}, it is expected that the energy level statistics of generic chaotic systems follow the eigenvalues of standard random matrix ensembles (full, with independent entries) dependent only on system symmetries, whose correlation functions are known explicitly.

 However, different random matrix models have
different  eigenvalue densities, and a meaningful comparison between microscopic statistics requires a transformation called \emph{unfolding}~\cite{mehta}. The unfolded eigenvalues ${\lambda^{\mathrm{unf}}_i}$
 and the true levels $\lambda_i$ are related via ${\lambda^{\mathrm{unf}}_i}=\mathcal{N}(\lambda_i)$, where $\mathcal{N}(x)=\int^x\rho(y)dy$ is the mean number of eigenvalues less than $x$. The unfolded spectrum has  mean level spacing equal to $1$.

\subsection{Nearest-neighbor spacing distribution}

The universality classes are characterized by the $N\to\infty$ behaviour of microscopic statistics, such as the nearest-neighbor spacing density (NNSD) $p(s)$, which represents the probability density of two eigenvalues being separated by a distance $s$ and no other eigenvalues in between,
\begin{equation}
 s_i=\lambda^{\mathrm{unf}}_{i+1}-\lambda^{\mathrm{unf}}_i,
    \label{eq:spacing}
\end{equation}
where $\lambda^{\mathrm{unf}}_1 \leq \lambda^{\mathrm{unf}}_2 \leq \dots \leq \lambda^{\mathrm{unf}}_N$ (the unfolding preserves the order of the eigenvalues).
For Poisson statistics, 
\begin{equation}
 p_{\mathrm{P}}(s)=e^{-s}
    \label{eq:Poisson}\,,
\end{equation}
which is the asymptotic distribution of the spacings between i.i.d.\ random variables with unit mean spacing. 
 Conversely, standard real random matrices display \emph{level repulsion} and the NNSD is described by the Wigner surmise
 \begin{equation}
p_{\mathrm{W}}(s)=\frac{\pi s}{2}e^{-\pi s^2/4}
    \label{eq:surmise}\,,
\end{equation}
also known as Wigner-Dyson distribution.

The Berry-Tabor and  the Bohigas-Giannoni-Schmit conjectures form a cornerstone of quantum chaos and have been effectively employed in various problems ranging from nuclear physics to number theory. 
However, they do not account for all possible types of models.
Several physical systems show a behavior that deviates from the dichotomy `integrable vs chaotic'. 

Since $S$ is a Euclidean random matrix, it is not clear \emph{a priori} whether it falls into one of the standard random matrix universality classes. In fact, we show below that at least locally, in some regions of the spectrum, the NNSD can be described by a suitable two-parameter family of \emph{Wigner-like distributions},
\begin{align}
 p(q,r;s)=a s^q \, e^{-b s^{r}},\quad q>-1, r>0.
    \label{eq:2par}
    \end{align}
The constants $a$ and $b$ are determined from the normalization conditions
\begin{equation}
\label{eq:norm_ab}
\int_{0}^{\infty} p(q,r;s)ds=1,\quad \int_{0}^{\infty} s p(q,r;s)ds=1,
\end{equation}
namely $a=r \left[\Gamma \left(\frac{q+2}{r}\right)\right]^{q+1}/\left[\Gamma \left(\frac{q+1}{r}\right)\right]^{q+2}$, and $b=\left[\Gamma \left(\frac{q+2}{r}\right)/\Gamma \left(\frac{q+1}{r}\right)\right]^r$. 

Notice that the exponent $q$ accounts for the level repulsion, while the exponent $r$ governs the tail for large spacings.
In particular, if $q=0$ and $r=1$, Eq.~\eqref{eq:2par} coincides with the Poisson distribution~\eqref{eq:Poisson}, while for $q=1$ and $r=2$ it reduces to the Wigner-Dyson distribution~\eqref{eq:surmise}.

For $r=q+1$ one obtaines the one-parameter family of Brody distributions~\cite{brody_1973} interpolating between Wigner-Dyson and Poisson, and originally proposed to describe systems characterized by a transition between the integrable and chaotic regime and by the coexistence of regions with regular and ergodic motion~\cite{brody,transition,transition2}. The parameter $q$ can be interpreted as a measure of the degree of chaoticity of the system under consideration.

Setting $r=1$ in~\eqref{eq:2par}, we get another widely used one-parameter family of spacing distribution, associated to quantum systems whose classical counterpart is pseudo-integrable. In those cases, the NNSD~\eqref{eq:2par} is the spacing distribution of the so-called intermediate statistics~\cite{intermediate,intermediate2,toep},
proposed to describe quantum systems with multifractal wavefunctions, i.e.\ with an intermediate behavior between localized and fully extended eigenstates.

\subsection{Eigenvector statistics} 
\label{sec:PR}

The microscopic spectral statistics can be associated with the support of the corresponding eigenvectors
\begin{equation}
    \ket{\Psi} = \sum_{j=1}^N \Psi_j \ket{j} .
\end{equation}
 More precisely, corresponding to Poisson statistics, one expect the eigenvectors to be \emph{localized}. Indeed, a random  matrix diagonal in a basis $\{\ket{j}\}$ with i.i.d.\ diagonal entries yields  $|\Psi_j|^2=\delta_{j\ell}$ for some $\ell$.  On the other hand, standard real random matrices, described by a Wigner-Dyson NNSD \eqref{eq:surmise}, have generically \emph{delocalized} eigenvectors~\cite{loc-deloc,loc-deloc2}. For example, a generic eigenvector of a Gaussian orthogonal matrix is uniformly distributed on the $N$-sphere, and hence its coefficients yield, for large-$N$, the so-called \emph{Porter-Thomas} (PT) distribution~\cite{porter_thomas}
 \begin{equation}
p_{\mathrm{PT}}(u)=\frac{1}{\sqrt{2\pi}}e^{-u^2/2},
    \label{eq:PT}
\end{equation}
where $u=\sqrt{N}\Psi_j$ with $\Psi_j$ being the $j$-th (real) component of $\ket{\Psi}$. We choose a distribution of real amplitudes since the eigenvectors of $S$, which is real and symmetric, can always be chosen real.
Notice that the vectors $\ket{\Psi}$ are normalized  on average, and the deviations from normalization for a given sample become irrelevant in the large-$N$ regime.

A widespread indicator of the (de)localization of the normalized  eigenvectors is the \emph{participation ratio} (PR),
\begin{equation}
\Pi(\ket{\Psi}) =  \frac{1}{\sum_{j=1}^N\abs{\Psi_j}^{4}}.
    \label{eq:PR}
\end{equation}
It is related to the effective number of nonvanishing components of the vector $\ket{\Psi}$. In particular, a localized eigenvector with $\abs{\Psi_j}^2=\delta_{jk}$ for some $k$ has  participation ratio $1$, while for a uniform eigenvector, such that $\abs{\Psi_j}^2=1/N$ for all $j$, $\Pi(\ket{\Psi})=N$ is maximum.

Finer information is encoded in the large-$N$ asymptotics of the \emph{moments} of the eigenvectors~\cite{toep,multifractal}:
\begin{equation}
M_q=  \Bigl\langle \sum_{j=1}^N \abs{\Psi_j}^{2q}  \Bigr\rangle \sim C_q N^{-\tau(q)}
    \label{eq:multifractal}, \quad \text{as } N\to\infty.
\end{equation}
Notice that $M_1 = 1$ due to the normalization of the eigenvectors, while the second moment $M_2$ coincides with the inverse of the participation ratio $\Pi$ in~\eqref{eq:PR}. 
The exponent $\tau(q)$ defines the \emph{fractal dimension}: $D_q= \frac{\tau(q)}{q-1}$. $D_q$ determines the fraction of nonzero components of the eigenvectors at a certain
scale.
For fully extended eigenvectors $D_q=1$ and for localized eigenvectors $D_q = 0$. The case when $D_q$ depends on $q$ corresponds to multifractal eigenvectors.

One can compute explicitly the large-$N$ scaling of the moments in Eq.~\eqref{eq:multifractal} under the assumption that a delocalized eigenvector follows the PT distribution~\eqref{eq:PT}:
\begin{align}
 \begin{aligned}
 M_q
 &\sim N \int_{-\infty}^{+\infty} \Psi_j^{2q}\sqrt{\frac{N}{2\pi}}e^{-\frac{N}{2}\Psi_j^2}d\Psi_j= \frac{(2q-1)!!}{N^{q-1}}
 \label{eq:M_q}\,.
 \end{aligned}   
\end{align}
This equation not only reproduces the expected scaling law for delocalized states $1/M_q \sim N^{q-1}$ discussed above (in this case $D_q=1$ for all $q$), but also provides the explicit value of the constants $C_q=(2q-1)!!$ that can be compared with the results of numerical simulations.
For instance, for the participation ratio we get
\begin{equation}
\Pi=\frac{1}{M_2}=\frac{N}{C_2}=\frac{N}{3}.
    \label{eq:C_2}
\end{equation}

\subsection{Numerical results}\label{sec:numerics} 
We studied the NNSD of the matrix $S$ defined in Eq.~\eqref{eq:Sb0}, in the limit $N \to \infty$ for different values of the cooperativeness parameter $b_0$.
The following analysis has been performed by sampling matrices $S$ of size $N=10^4$ for four representative values of the cooperativeness parameter: $b_0=0.1,1,3,10$.
The eigenvalues of $S$ have been used to compute the spacings $s_i \,\, \forall i=1, \dots, N-1$ according to Eq.~\eqref{eq:spacing}.

To study the local NNSD of the matrix $S$ with $b_0=0.1,1,3$ we considered several intervals of the spectrum containing 1000 eigenvalues. For $b_0=10$, instead, we restricted our attention to regions containing 600 eigenvalues, since the spectrum is much broader [as it can be seen comparing Fig.~\ref{fig:b10} with~\ref{fig:b0_1}, \ref{fig:b1} and~\ref{fig:b3}] and hence too large a fraction of the eigenvalues would mix different regimes. 
For each region of the spectrum, we compared the spacings of the unfolded eigenvalues with the two-parameter NNSD~\eqref{eq:2par}. An example is reported in Fig.\ \ref{fig:2parameterfit}, which shows the spacing distribution for $b_0=1$ in the region $\lambda \in [1.067,1.225]$.

\begin{figure}[hb]
\centering
\includegraphics[width=.6\columnwidth]{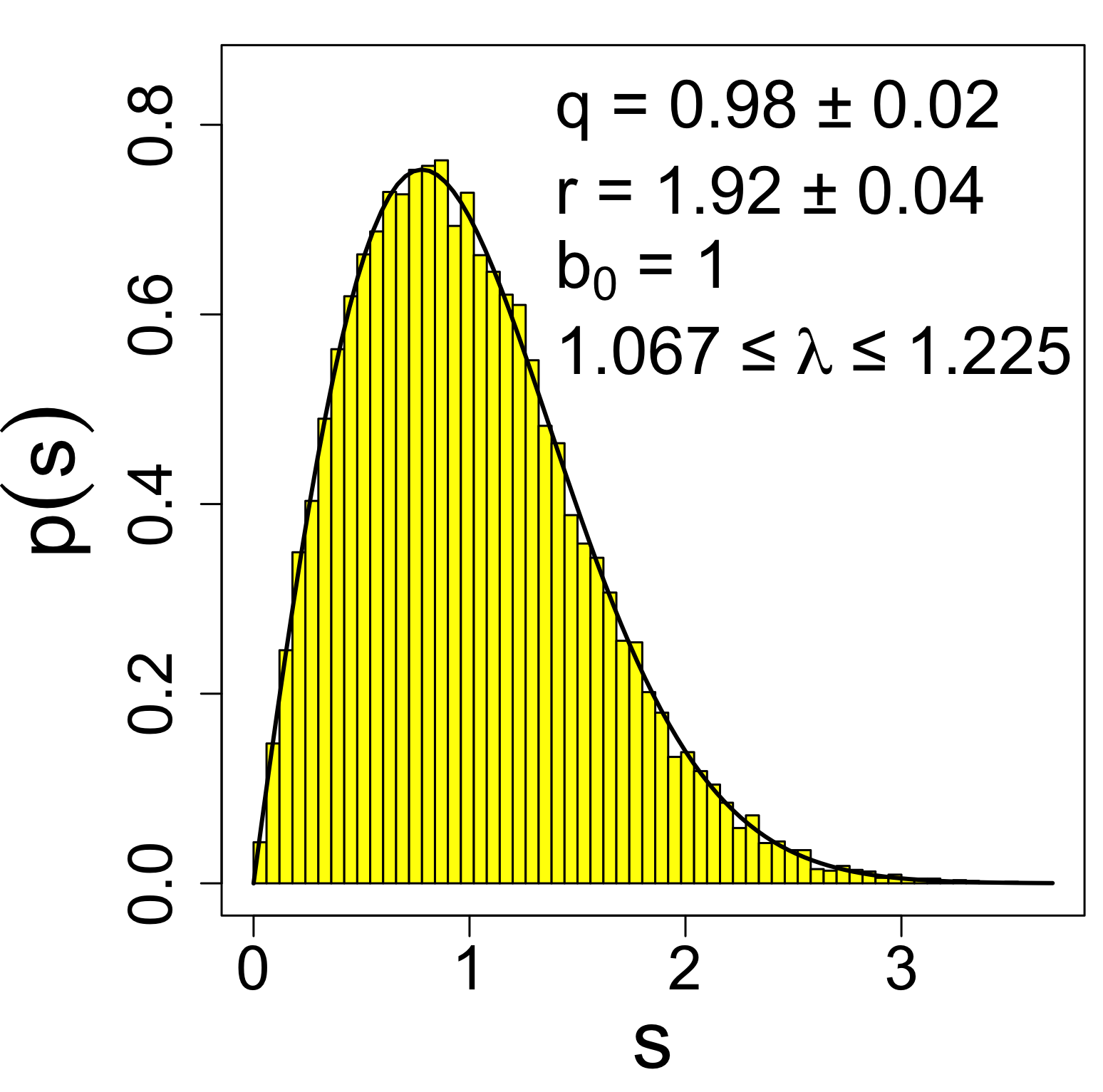}
\caption{Histogram of the spacing distribution in the region corresponding to eigenvalues $\lambda \in [1.067,1.225]$ for $b_0=1$, fitted with the two-parameter NNSD~\eqref{eq:2par}. Notice the best-fit values of the parameters $q\simeq 1$ and $r\simeq 2$, which correspond to Wigner-Dyson and hence to delocalized states. }
\label{fig:2parameterfit}
\end{figure}

Our findings are displayed in Figs. \ref{fig:eigenvalues}(e)-(h)
and can be summarized as follows. 
For each value of $b_0$, there is a central region of the spectrum around $\lambda=1$ (which corresponds to the decay rate of an isolated atom) whose NNSD is described by the Wigner-like surmise \eqref{eq:2par} with parameters $q \simeq 1$ and $r \simeq 2$, which corresponds to the Wigner-Dyson distribution~\eqref{eq:surmise}. Moving towards the edges of this region, we approach a subradiant and a superradiant region of the spectrum for which the spacing distribution is still described by~\eqref{eq:2par}, but the best-fit values of $q$ and $r$ deviate from the Wigner-Dyson case.
Finally, the NNSD at the tails of the spectrum (the extreme sub- and superradiant regions) is not described by~\eqref{eq:2par} (except for the extreme subradiant region for $b_0=3$). 
This is not surprising, since in general, the bulk and the tails of the spectrum of random matrices behave very differently. 

\begin{figure*}[t]
\centering
\subfigure[]{\includegraphics[trim={0 0 0 1cm}, clip,width=0.5\columnwidth]{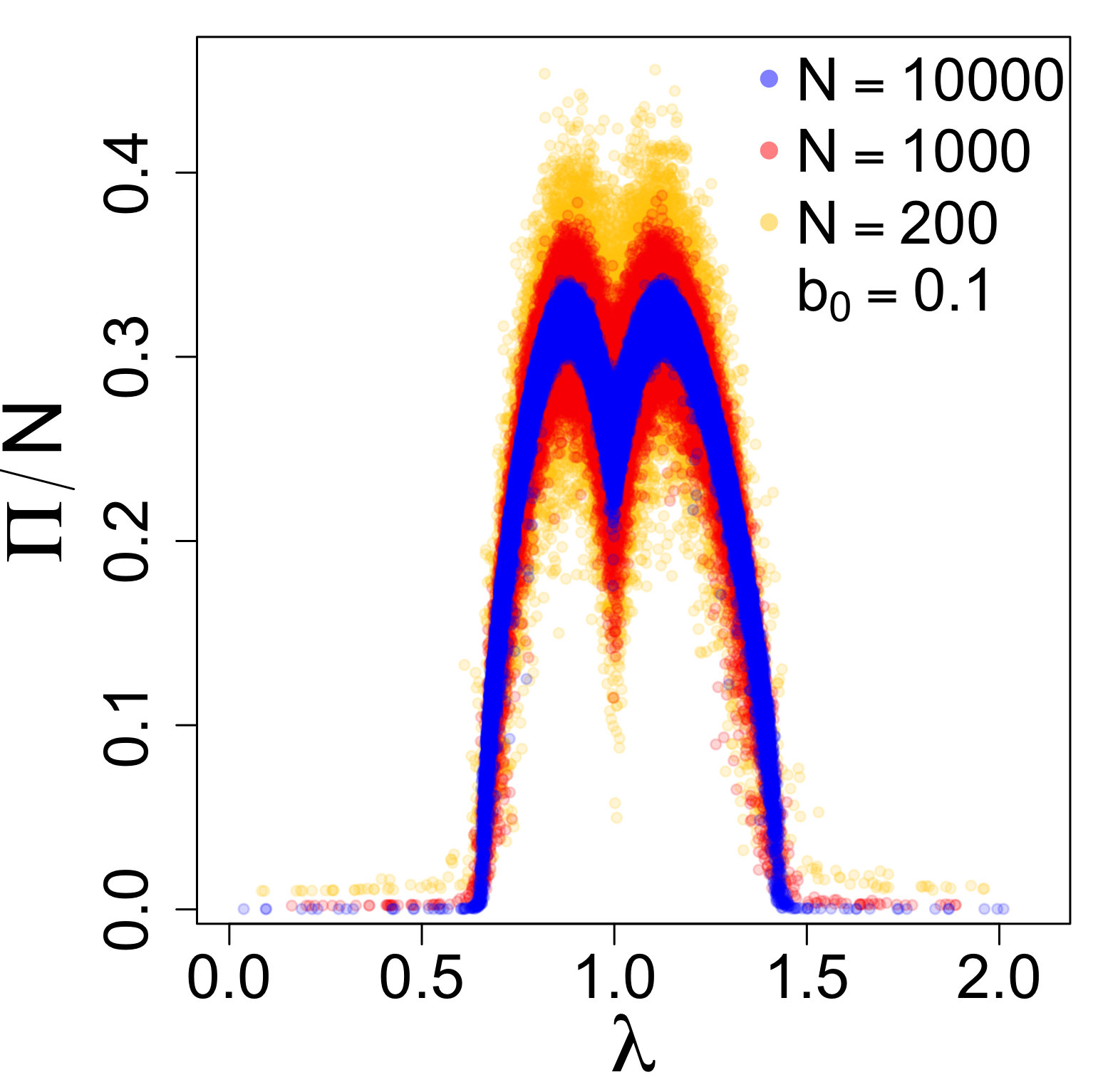}\label{fig:PRsovrappostib0_1}}
\subfigure[]{\includegraphics[trim={0 0 0 1cm}, clip,width=0.5\columnwidth]{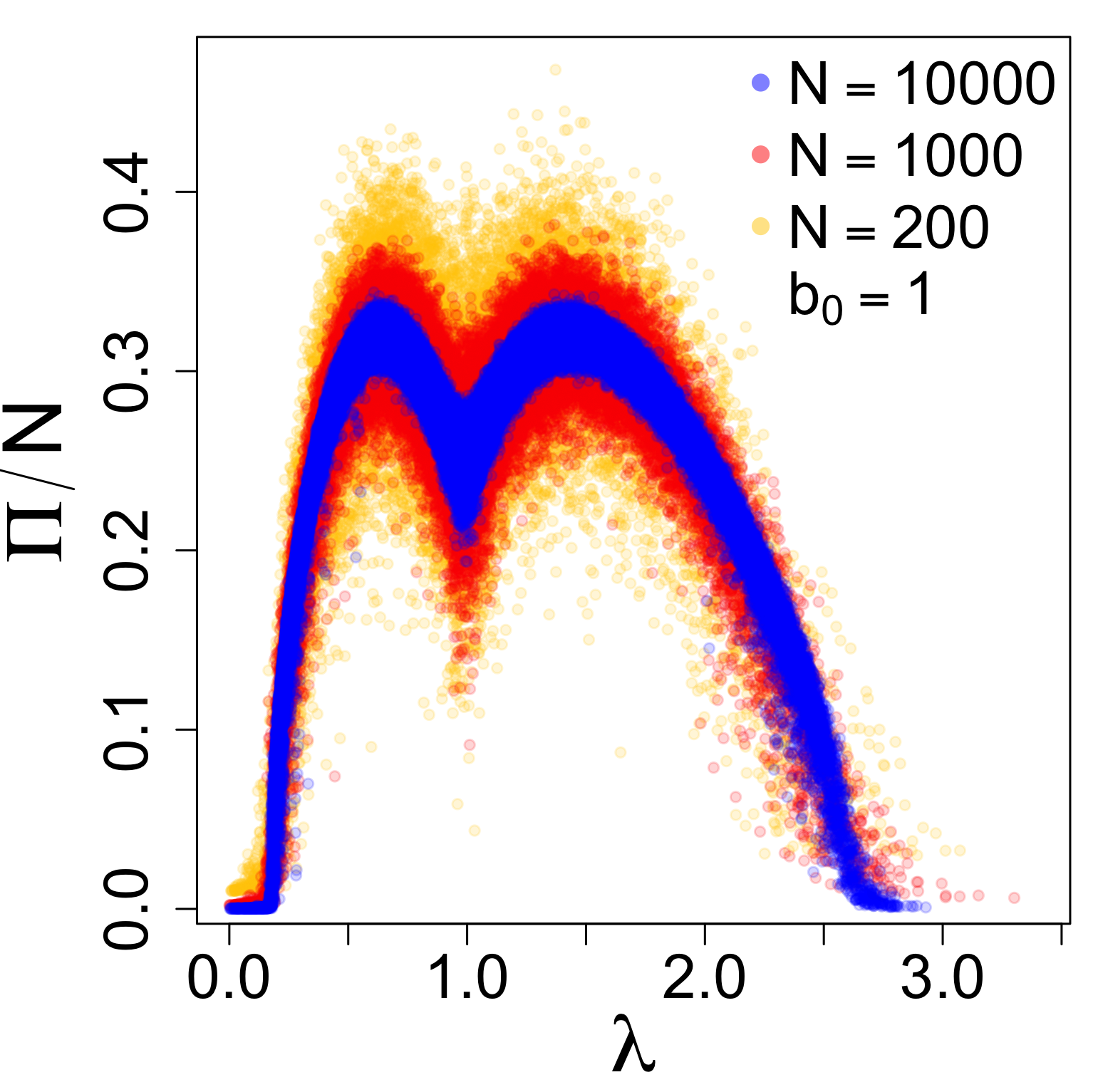}\label{fig:PRsovrappostib1}}
\subfigure[]{\includegraphics[trim={0 0 0 1cm}, clip, width=0.5\columnwidth]{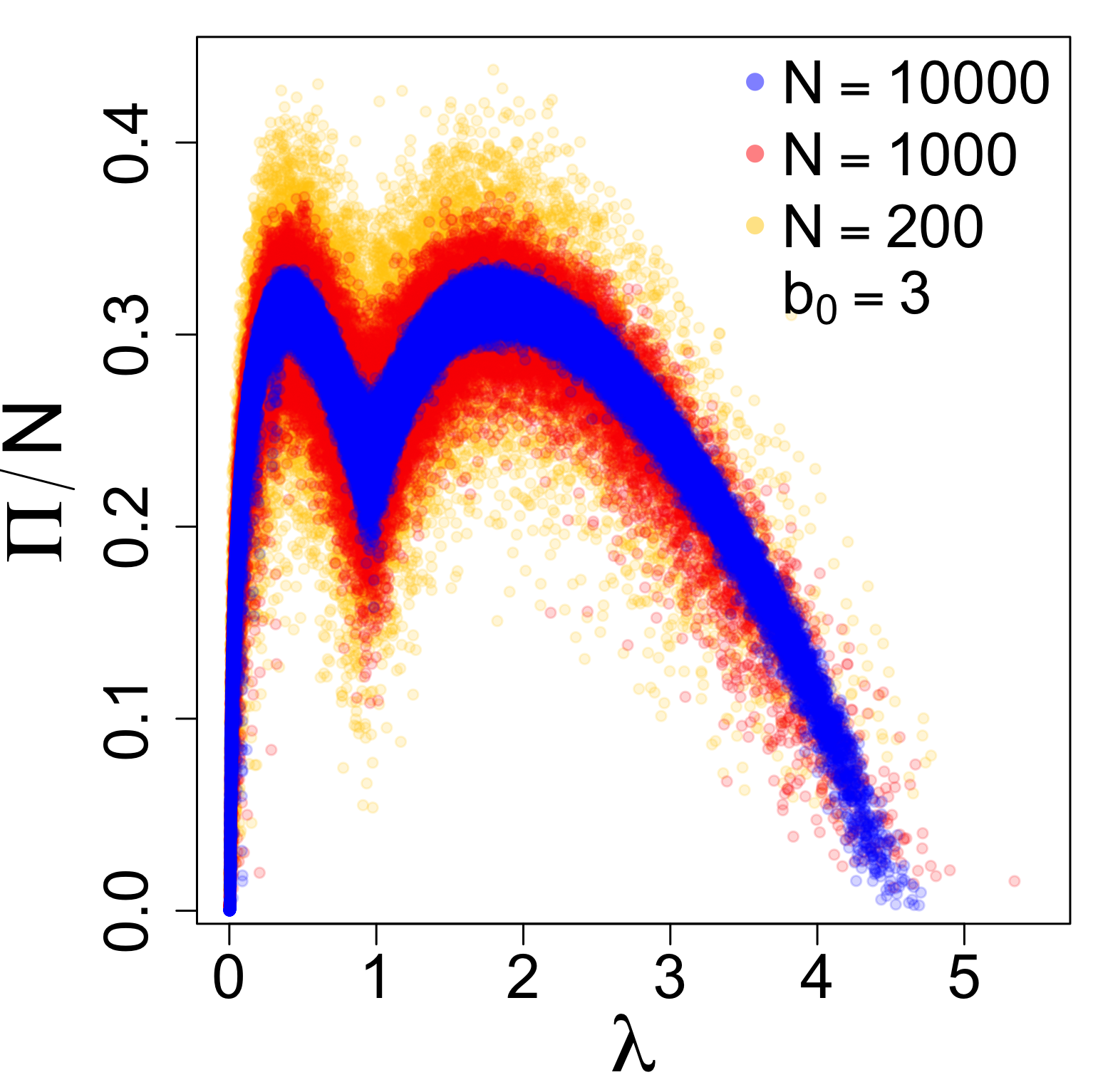}\label{fig:PRsovrappostib3}}
\subfigure[]{\includegraphics[trim={0 0 0 1cm}, clip,width=0.5\columnwidth]{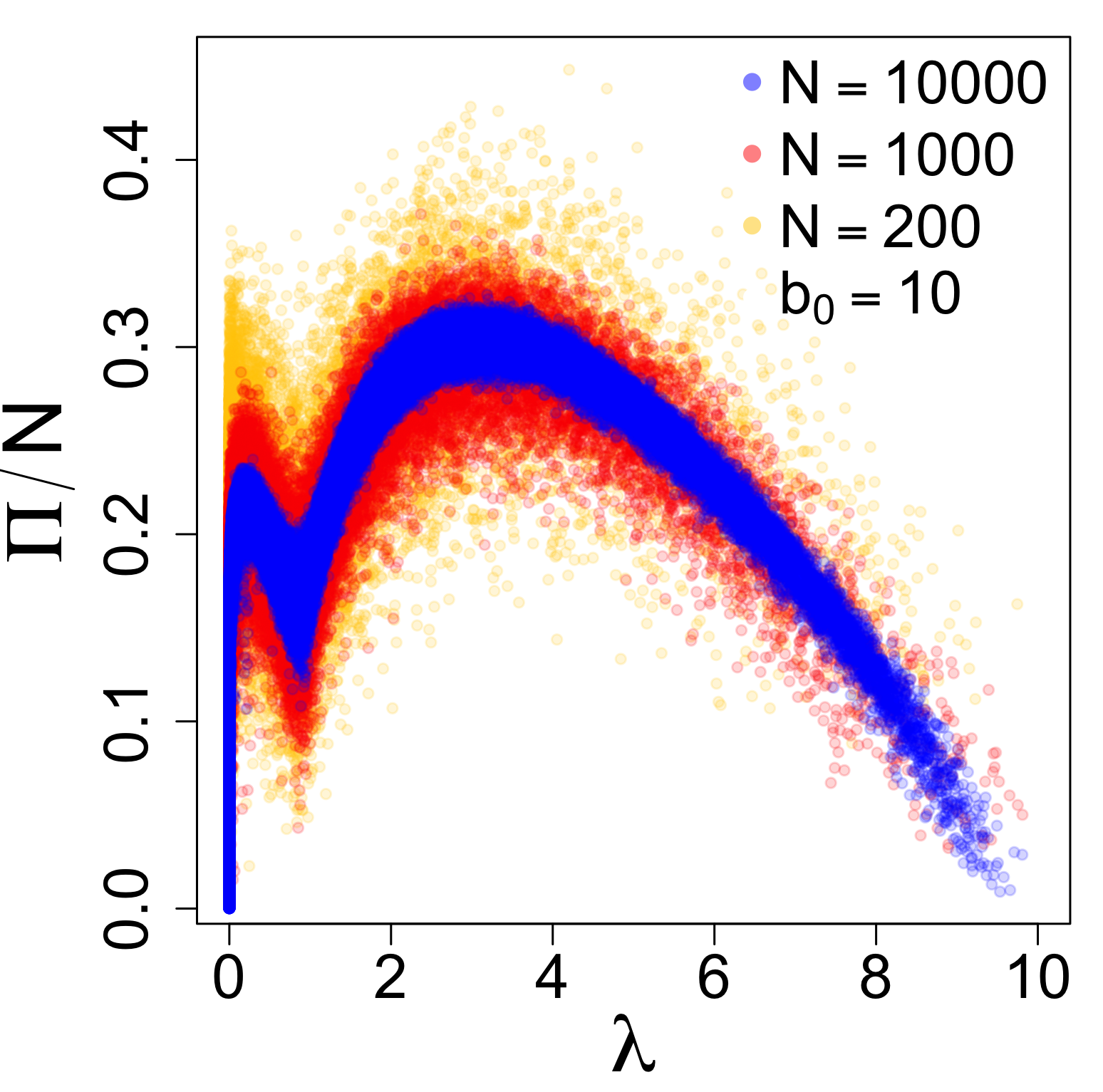}\label{fig:PRsovrappostib10}}
\caption{Participation ratio $\Pi$ divided by $N$ as a function of the corresponding eigenvalue for $b_0=0.1,1,3,10$. For each value of $b_0$ we show the plots of $\Pi/N$ corresponding to $N=200, 1000, 10000$. Notice that for all values of $b_0$ the participation ratio of the most subradiant state is (very) close to 2 and that in the central region of the spectrum the states are delocalized with a participation ratio $\simeq N/3$, in accord with Eq.\ \eqref{eq:C_2}. 
}
\label{fig:PRsovrapposti}
\end{figure*}

\begin{figure*}[t]
\centering
\subfigure[]{\includegraphics[trim={0 0 0 1cm}, clip,width=.7\columnwidth]{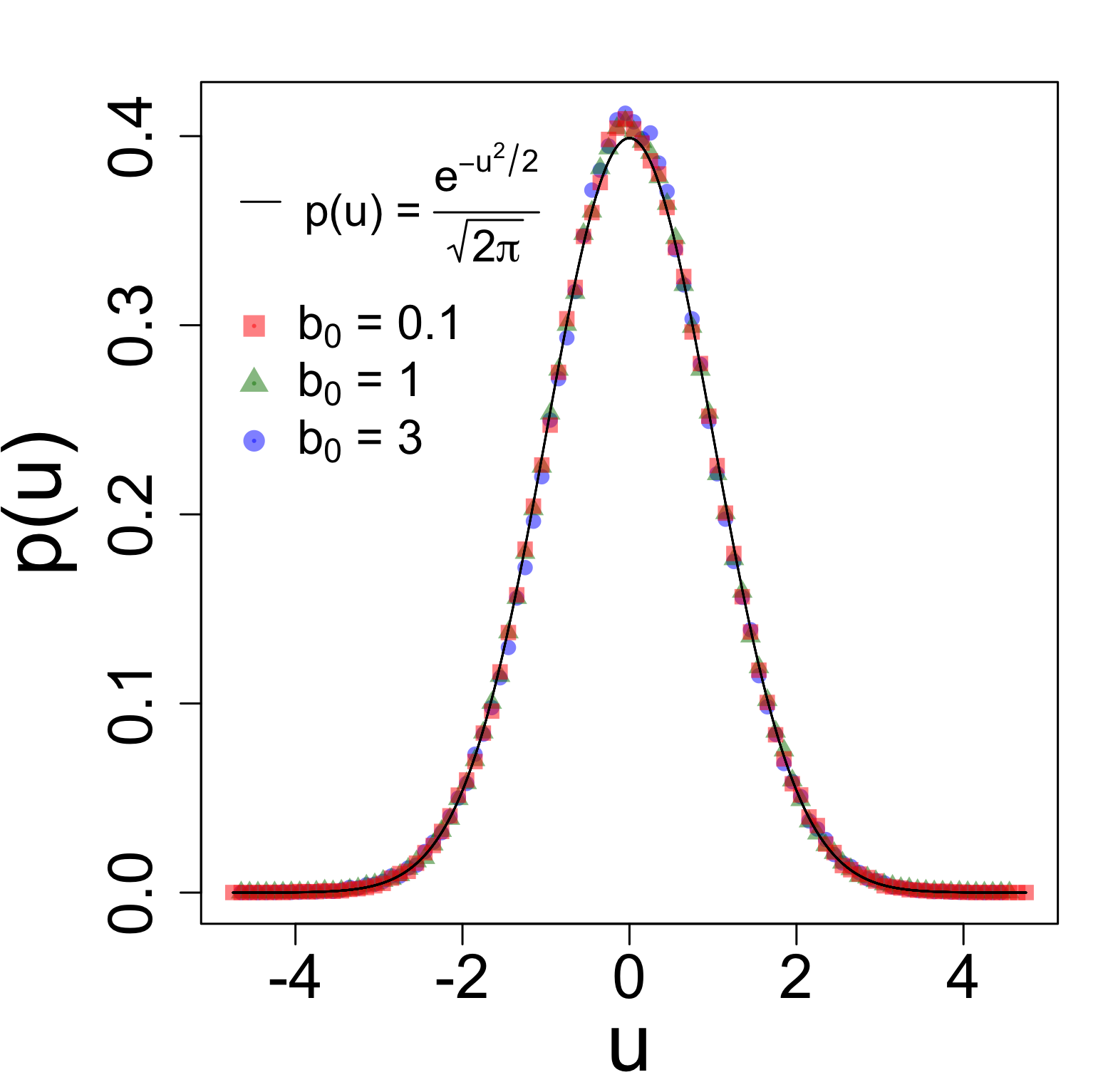}\label{fig:porter_picco_subradiante}}
\subfigure[]{\includegraphics[trim={0 0 0 1cm}, clip,width=.7\columnwidth]{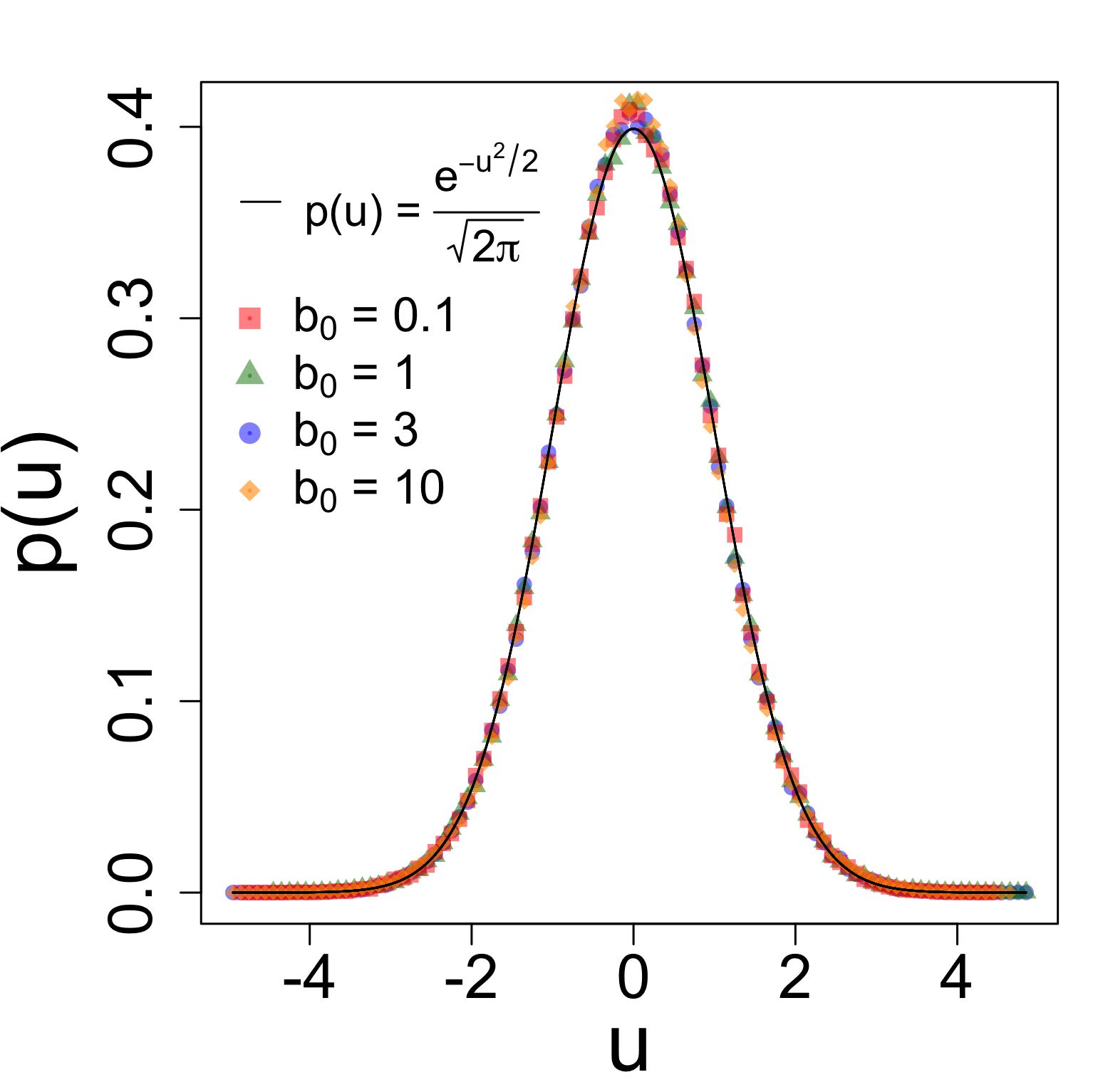}\label{fig:porter_picco_superradiante}}
\caption{
Distribution of $u=\sqrt{N}\Psi_j$, where $\Psi_j$ is the $j$-th component of the eigenvector $\ket{\Psi}$ of the matrix $S$ with $b_0=0.1,1,3,10$. The solid line is the the Porter-Thomas distribution \eqref{eq:PT},  expected for random delocalized eigenvectors.
(a) eigenvector $\ket{\Psi}$ corresponding to the subradiant eigenvalue for which the participation ratio $\Pi$ has a maximum (see Fig.\ \ref{fig:PRsovrapposti}). 
(b) eigenvector $\ket{\Psi}$ corresponding to the superradiant eigenvalue for which the participation ratio has a maximum (see Fig.\ \ref{fig:PRsovrapposti}).  For $b_0=10$ the PT distribution is no longer valid in the subradiant part, and is not displayed in the figure. 
}
\label{fig:111}
\end{figure*}

The numerical analysis shows that the full spacing distribution of $S$ in the bulk of the spectrum is difficult to characterize, even if one considers a two-parameter Wigner-like surmise~\eqref{eq:2par}. Rather than trying to fit the full distribution $p(s)$ we therefore set ourselves the task of verifying whether at least the level repulsion  is a generic property in the bulk. This amounts to analyze the small $s\to0$ behaviour of $p(s)$. To this purpose, we have studied the cumulative probability distribution for small spacings, i.e.\ $P(s \leq s_0)$ as $s_0\to0$.
Therefore, for several values of $b_0$ and different subregion of the bulk of the spectrum, we computed $N(s_0)=P(s \leq s_0)$ as a function of $s_0$. We found numerically (data not shown) that  $N(s_0)\sim s_0^{q+1}$, with $q\simeq1$ so that the NNSD in the bulk vanishes at small spacings as $p(s)\sim s$.
According to the statistical approach to quantum chaos (see Sec.\ \ref{sec:NNSD}), these findings suggest that the bulk of the spectrum of the ERM $S$ displays level repulsion and is associated with quantum phenomena whose classical counterparts are chaotic.

As explained in Sec.\ \ref{sec:PR}, since we found that the NNSD in the bulk of the spectrum is described by the Wigner-Dyson distribution \eqref{eq:surmise}, we expect the corresponding eigenvectors to be delocalized. Therefore, to complete this analysis we looked at the statistics of the eigenvectors in the bulk of the spectrum. First, we computed the PR of the eigenvectors of $S$ for the same representative values of the cooperativeness parameter $b_0=0.1,1,3,10$, and  $N=200, 1000, 10000$. See Fig.~\ref{fig:PRsovrapposti}. We found that the average PR scales with $N$, as expected for delocalized states. In particular, the plots in Fig.~\ref{fig:PRsovrapposti} show that the average of $\Pi/N$ converges for large $N$, and we verified that the fluctuations of $\Pi/N$ are of order $N^{-1/2}$. 
We note that the average PR has two maxima, corresponding to eigenvalues in the subradiant and superradiant regions of the spectrum, respectively, and that $\Pi$ in the central region of the spectrum is of the order of $N/3$, as predicted by the PT distribution, see Eq.~\eqref{eq:C_2}. A further evidence of the delocalization of eigenvectors corresponding to the bulk of the spectrum comes from Fig.\ \ref{fig:111}, which shows the distribution of $u=\sqrt{N}\Psi_j$, where $\Psi_j$ is the $j$-th component of the eigenvector $\ket{\Psi}$ of $S$ corresponding to the sub- and superradiant eigenvalues for which the PR is maximum. As can be seen, the distribution $p(u)$ is Porter-Thomas \eqref{eq:PT}, expected for random delocalized eigenvectors.

Moreover, we checked that for all values of $b_0$, the participation ratio of the most subradiant states is (very) close to $2$, indicating that the lowest emission rates are associated with excitations shared between a pair of atoms, with antisymmetric excitation amplitudes (see the limiting case~\eqref{eq:psim}). Let us note that the participation ratio becomes more asymmetric as $b_0$ grows (see Fig.\ \ref{fig:PRsovrapposti}), as expected from the increasing asymmetry of the spectrum. Moreover, for all values of $b_0$ the PR drops near $\lambda=1$. We offer no explanation for this behaviour. A similar behaviour was also observed in Ref.~\cite{guerin2017population} for a related matrix.

\begin{figure}
\centering
\includegraphics[scale=0.11]{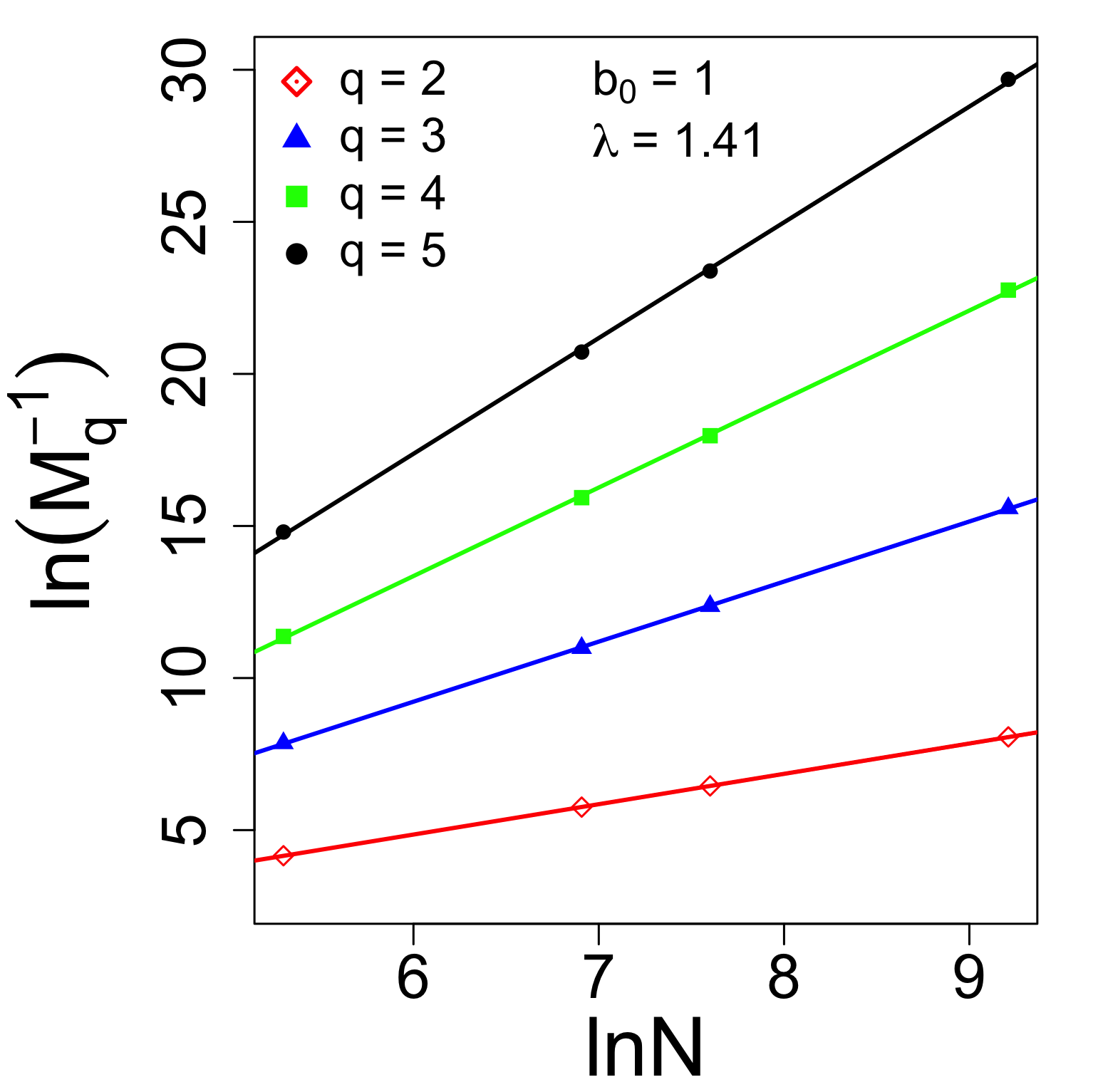}
\caption{Log-log plot of the mean inverse moment $1/M_q$ with $q=2,3,4,5$, as a function of $N$ in the region near the superradiant maximum $\lambda=1.41$ for $b_0=1$. For the second moment $q=2$ we obtain a best-fit line (red) with slope $\tau(2)=0.999 \pm 0.002$, indicating that the participation ratio scales linearly with $N$ as expected for delocalized states. Similarly, the blue, green and black lines corresponding with the third, fourth and fifth moment have best-fit slope $\tau(3)=1.98 \pm 0.01$, $\tau(4)=2.91 \pm 0.03$ and $\tau(5)=3.81 \pm 0.05$ respectively. These results corroborate the expected scaling $1/M_q \sim N^{q-1}$ for random delocalized states. See Eq.~\eqref{eq:M_q}. 
}
\label{fig:linearfit}
\end{figure}

We also numerically computed  the higher moments~\eqref{eq:multifractal} for the eigenvectors in the central part of the spectrum, and extracted the value of the fractal dimension $D_q$. See Fig.~\ref{fig:linearfit}. We found:
\begin{align}\label{eq:fractaldim_estimate}
\begin{aligned}
&D_2 \simeq 0.998 \,, \quad
D_3 \simeq 0.985 \,, \\
&D_4 \simeq 0.967 \,, \quad
D_5 \simeq 0.950 \,.
\end{aligned}
\end{align}
The deviation from the maximum value 1 increases very slightly for increasing values of $q$, practically ruling out multifractality.

All these results lead to the conclusion that the eigenvectors of the matrix $S$ corresponding to the bulk of the spectrum are delocalized, as already suggested by the study of the NNSD. Hence the bulk of $S$ shows the typical features of chaotic quantum systems which can be successfully described using the universal statistical properties of classical random matrices.

\section{Conclusions}
\label{sec:conclusions}

We characterized the bulk spectral properties of the decay rate matrix $S$, defined in~\eqref{eq:S2}, related to the existence of subradiant and superradiant decay modes of a random cold atomic cloud. We identified a precise low-density scaling~\eqref{eq:Sb0} in which $S$ has a limit eigenvalue density parametrized by the cooperativeness parameter $b_0$. It is worth remarking that such a parameter has an intuitive physical relevance, as the number of atoms that coherently cooperate in photon emission. We corroborate this intuitive view with analytic and (strong) numerical evidences that $b_0$, regardless of its specific value, is the \textit{only} relevant parameter that determines the eigenvalue distribution in the limit of large number of atoms. For small values of $b_0$ we found that the asymptotic eigenvalue density  can be approximated by the triangular density in Eqs.\ \eqref{eq:triangle1} and \eqref{eq:a_estimate}. Then we studied the nearest-neighbour spacing distribution of the eigenvalues for several values of $b_0$ using a two-parameter family of Wigner-like distributions~\eqref{eq:2par}. We found that, although $S$ is a Euclidean random matrix, in the bulk of the spectrum of $S$ there is level repulsion. The eigenvector statistics (participation ratio and higher moments) confirms that the eigenvectors in the bulk are delocalized.

We conclude with some possible directions worth exploring and further food for thought. For the physical implications of the considered model, it would be interesting to study the microscopic statistics at the edges of the spectrum of $S$, which are physically related to strong subradiance and superradiance. Preliminary numerical calculations seem to indicate that the edges of $S$ do not fall into the KPZ universality class~\cite{KPZ,KPZ2,KPZ3} of standard random matrices characterized by the Tracy-Widom distribution. A relevant point, beyond the scope of the present study, would be to investigate the interplay between the purely dissipative dynamics, described by $S$, and the full Hamiltonian dynamics, entailed by the coherent energy shifts and transitions induced by coupling with the field. 
It should also be possible to compute explicitly the limiting moments of $S$ for all values of $b_0$, thus extending what we have presented here for the approximate triangular density at $b_0\to0$. Moreover, it could be useful to study similar cooperative effects and the associated ERMs in dimension $d\neq3$ \cite{lonigro2021stationary,Bordenave13}, in view of alternative experimental implementations.

\section*{Acknowledgments}

We acknowledge the support by Regione Puglia and QuantERA ERA-NET Cofund in Quantum Technologies (GA No.\ 731473), project PACE-IN, and by INFN through the project `QUANTUM'. FDC acknowledges the support by the Italian National Group of Mathematical Physics (GNFM- INdAM), and by Regione Puglia through the project `Research for Innovation' - UNIBA024. PF acknowledges the support by PNRR MUR project CN00000013-`Italian National Centre on HPC, Big Data and Quantum Computing'. SP and FVP acknowledge the support by PNRR MUR project PE0000023-NQSTI.
RB acknowledges the support from grants 2018/15554-5 and 2019/13143-0, S\~ao Paulo Research Foundation (FAPESP), from the Brazilian National Council for Scientific and Technological Development (CNPq, Grant No. 313886/2020-2), and from the French government, through the UCA J.E.D.I. Investments in the Future project managed by the National Research Agency
(ANR) with the reference number ANR-15-IDEX-01. RB and RK received support from the project STIC-AmSud (Ph879-17/CAPES 88887.521971/2020-00) and CAPES-COFECUB (CAPES 88887.711967/2022-00).

\appendix

\section{Positive-definiteness of the matrix $S$}
\label{app:positive}
The  $N\times N$ matrix $S$ with entries
\begin{equation}
S_{ij}=
\operatorname{sinc}\left(k_a \|\vb*{r}_i-\vb*{r}_j\|\right), 
\end {equation}
is non-negative definite for every $\vb*{r}_1,\ldots,\vb*{r}_N$. Indeed, setting $f(\vb*{r}-\vb*{r}')=\operatorname{sinc}\left(k_a \|\vb*{r}-\vb*{r}'\|\right)$ we have
\begin{align*}
\langle \psi,S\psi\rangle&=\sum_{i,j=1}^{N}f\left(\vb*{r}_i-\vb*{r}_j\right)\overline{\psi_i}\psi_j\\
&=\sum_{i,j=1}^{N}\frac{1}{\left(2\pi\right)^{\frac{3}{2}}}\int_{\mathbb{R}^3}\hat{f}(\vb*{k})e^{i k\cdot\left(\vb*{r}_i-\vb*{r}_j\right)}\overline{\psi_i}\psi_j\\
&=\frac{1}{\left(2\pi\right)^{\frac{3}{2}}}\int_{\mathbb{R}^3}\hat{f}(\vb*{k})\left|\sum_{j=1}^{N}e^{i \vb*{k}\cdot\vb*{r}_j}\psi_j\right|^2\geq0,
\end{align*}
where the last line follows from
\begin{equation}
\hat{f}(\vb*{k})=\sqrt{\frac{\pi}{2}}\frac{\delta\left(k_a-\|\vb*{k}\|\right)}{k_a^2}\geq0.
\end{equation}
This calculation shows that, in analysing the eigenvalue distribution of the ERM $S$, the  idea of replacing $S$ with a simpler random matrix whose entries are $\sinc\xi_{ij}$ with $\xi_{ij}=\xi_{ji}$ \emph{independent} with the same distribution of $k_a\|\vb*{r}_i-\vb*{r}_j\|$ cannot work. The resulting random matrix will \emph{not} be non-negative definite, in general. In fact, one expects the mean eigenvalue density to be semicircular in this case.

\section{Joint distribution of the interatomic distances}
\label{app:joint}

The distributions of interatomic distance~\eqref{eq:density_R} and~\eqref{eq:density_RR'} are specialisations of  the following cute formula.
\begin{lem}Let $\vb*{X}_0,\vb*{X}_1,\ldots, \vb*{X}_k$ be independent standard Gaussian vectors in $\mathbb{R}^3$.%
 The joint density of the Euclidean lengths ${X}_0:=\|\vb*{X}_0\|,R_{01}:=\|\vb*{X}_1-\vb*{X}_0\|,\ldots, R_{0k}=\|\vb*{X}_k-\vb*{X}_0\|$  is
 \begin{multline}
 \label{eq:jointX0R}
 p({x}_0,{r}_1,\ldots,{r}_k)
 =\left(\frac{2}{\pi }\right)^{\frac{k+1}{2}}x_0^{2-k}e^{-\frac{1}{2}(k+1){x}_0^2} \\
 \times\prod_{i=1}^kr_ie^{-\frac{1}{2}{r}_i^2}
\sinh(x_0{r}_i).
\end{multline}
\end{lem}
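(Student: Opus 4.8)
The plan is to compute the joint density by a change of variables from the Cartesian coordinates of $\vb*{X}_0,\ldots,\vb*{X}_k$ to the lengths $x_0,r_1,\ldots,r_k$ together with suitable angular variables, and then integrate out the angles. First I would condition on $\vb*{X}_0$: given $\vb*{X}_0=\vb*{x}_0$ with $\|\vb*{x}_0\|=x_0$, the vectors $\vb*{X}_i-\vb*{x}_0$ are independent Gaussians with mean $-\vb*{x}_0$ and unit covariance, so $R_{0i}=\|\vb*{X}_i-\vb*{x}_0\|$ are conditionally independent. Hence the conditional joint density factorizes, and I only need the conditional density of a single $R_{0i}$.

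For the single-variable step, let $\vb*{Y}=\vb*{X}_i-\vb*{x}_0$, so $\vb*{Y}\sim\mathcal{N}(-\vb*{x}_0,\mathbb{1})$ in $\mathbb{R}^3$, and I want the density of $r=\|\vb*{Y}\|$. Writing the density of $\vb*{Y}$ as $(2\pi)^{-3/2}\exp(-\tfrac12\|\vb*{y}+\vb*{x}_0\|^2)$ and passing to spherical coordinates about the origin, the radial density is obtained by integrating $\exp(-\tfrac12(r^2+x_0^2+2 r x_0\cos\theta))$ over the sphere of radius $r$; the surface measure contributes $r^2$, the $\theta$-integral $\int_0^\pi e^{-r x_0\cos\theta}\sin\theta\,d\theta=\tfrac{2\sinh(r x_0)}{r x_0}$, and the azimuthal integral contributes $2\pi$. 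Collecting constants gives the conditional density of $R_{0i}$ proportional to $r\, e^{-\frac12 r^2}\sinh(x_0 r)\, e^{-\frac12 x_0^2}$ with prefactor $\sqrt{2/\pi}\,/x_0$. Taking the product over $i=1,\ldots,k$ and multiplying by the density of $x_0$ — which, from the same spherical-coordinates computation applied to $\vb*{X}_0$, is $\sqrt{2/\pi}\,x_0^2 e^{-\frac12 x_0^2}$ — yields
\begin{equation}
p(x_0,r_1,\ldots,r_k)=\sqrt{\tfrac{2}{\pi}}\,x_0^2 e^{-\frac12 x_0^2}\prod_{i=1}^k \sqrt{\tfrac{2}{\pi}}\,\frac{1}{x_0}\,r_i e^{-\frac12 r_i^2}\sinh(x_0 r_i),
\end{equation}
and simplifying the powers of $x_0$ and the Gaussian exponents (there are $k+1$ factors of $e^{-\frac12 x_0^2}$ absorbed once one accounts for the $\sinh$ expansion — more carefully, $k$ of the conditional factors carry an $e^{-\frac12 x_0^2}$ and the marginal carries one, giving $e^{-\frac12(k+1)x_0^2}$) reproduces \eqref{eq:jointX0R}.

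The only real subtlety, and the step I expect to require the most care, is bookkeeping the powers of $x_0$ and the exact placement of the Gaussian factor $e^{-\frac12(k+1)x_0^2}$: each of the $k$ conditional radial densities contributes a factor $1/x_0$ and a factor $e^{-\frac12 x_0^2}$, while the marginal of $x_0$ contributes $x_0^2 e^{-\frac12 x_0^2}$, so the net power is $x_0^{2-k}$ and the net Gaussian exponent is $-\tfrac12(k+1)x_0^2$, matching the claimed formula; the constant is $(2/\pi)^{(k+1)/2}$ from $k+1$ spherical-shell integrations. Finally, to recover \eqref{eq:density_R} one sets $k=1$ and integrates out $x_0$ over $[0,\infty)$ using $\int_0^\infty x_0\, e^{-x_0^2}\sinh(x_0 r_1)\,dx_0$, and for \eqref{eq:density_RR'} one sets $k=2$ and integrates out $x_0$ against $\sinh(x_0 r)\sinh(x_0 r')$, in both cases a Gaussian integral of $\sinh$ that evaluates in elementary closed form; I would present these two specializations explicitly as the payoff of the lemma.
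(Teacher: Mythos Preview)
Your proposal is correct and follows essentially the same route as the paper: the paper writes the joint Gaussian density of $(\vb*{X}_0,\vb*{R}_{01},\ldots,\vb*{R}_{0k})$ after the linear shift $\vb*{R}_{0i}=\vb*{X}_i-\vb*{X}_0$, then passes to spherical coordinates and performs the same angular integral $\int_0^\pi e^{\pm x_0 r_j\cos\theta}\sin\theta\,d\theta=2\sinh(x_0 r_j)/(x_0 r_j)$ for each $\vb*{r}_j$ and a plain $4\pi$ for $\vb*{x}_0$. Your conditioning-on-$\vb*{X}_0$ phrasing is just a reorganization of the same computation; the bookkeeping of the $x_0^{2-k}$ power, the $e^{-\frac12(k+1)x_0^2}$ exponent, and the $(2/\pi)^{(k+1)/2}$ constant is exactly as you describe.
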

\begin{proof}
 The joint density of the $k+1$ vectors $\vb*{X}_0,\vb*{X}_1,\ldots, \vb*{X}_k$, is
 $$
 p(\vb*{x}_0,\vb*{x}_1,\ldots, \vb*{x}_k)=\frac{1}{\left(2\pi\right)^{\frac{3}{2}(k+1)}}e^{-\frac{1}{2}\sum_{i=0}^k\|\vb*{x}_i\|^2 }.
 $$
Hence, the vectors $\vb*{X}_0$, $\vb*{R}_{01}=\vb*{X}_1-\vb*{X}_0$, $\vb*{R}_{02}=\vb*{X}_2-\vb*{X}_0$, etc. 
are jointly Gaussian with density 
\begin{align*}
 p(\vb*{x}_0,\vb*{r}_1,\ldots,\vb*{r}_k)&
 &=\frac{e^{-\frac{1}{2}(k+1)\|\vb*{x}_0\|^2-\frac{1}{2}\sum_{i=1}^k\left(\|\vb*{r}_i\|^2+2\vb*{x}_0\cdot\vb*{r}_i\right)}
}{\left(2\pi\right)^{\frac{3}{2}(k+1)}}.
 \end{align*}
 The joint density of the Euclidean lengths $X_0,R_{01},\ldots, R_{0k}$  is obtained by integrating over the angles,
 \begin{multline*}
 p({x}_0,{r}_1,\ldots,{r}_k)
 = \frac{x_0^2r_1^2\cdots r_k^2}{\left(2\pi\right)^{\frac{3}{2}(k+1)}}e^{-\frac{1}{2}(k+1){x}_0^2-\frac{1}{2}\sum_{i=1}^k{r}_i^2}\\
\small\times4\pi
 \prod_{j=1}^k2\pi \int_{0}^{\pi}d\theta_j\sin\theta_j e^{x_0{r}_j\cos\theta_j}.
\end{multline*}
After performing the elementary angular integrations we get the claimed formula. 
\end{proof}

\section{Moments of the  entries of $S$}
\label{app:mom}

The moments of the off-diagonal entries of $S$ are 
\begin{equation}
\langle S_{ij}^m\rangle=\int_{0}^{\infty}\left(\frac{\sin{\sqrt{M}r}}{\sqrt{M}r}\right)^mp_R(r)dr
\label{eq:integral_sinc}
\end{equation}
where the density~\eqref{eq:density_R} of $R=\|\vb*{x}_i-\vb*{x}_j\|$, $i\neq j$ is
$$
p_{R}(r)=\frac{1}{\sqrt{4\pi }}r^2 e^{-\frac{r^2}{4}},\quad r\geq0.
$$
It is possible to compute the first three moments,
\begin{align}
\langle S_{ij}\rangle&=e^{-M} \\
\langle S_{ij}^2\rangle&=\frac{1-{e^{-4M}}}{4M}\\ 
\langle S_{ij}^3\rangle&=\frac{\sqrt{\pi}}{2}\frac{2-\operatorname{Erfc}(\sqrt{M})+\operatorname{Erfc}(3\sqrt{M})}{8M^{\frac{3}{2}}},
\end{align}
and get the large-$M$ asymptotics
\begin{equation}
\langle S_{ij}\rangle\sim 0, \quad \langle S_{ij}^2\rangle\sim \frac{1}{4M},\quad  \langle S_{ij}^3\rangle\sim \frac{\sqrt{\pi}}{8M^{\frac{3}{2}}}.
\end{equation}
For $m\geq4$, it is difficult to perform exactly the integral~\eqref{eq:integral_sinc}. However we can extract its precise large-$M$ asymptotics. Indeed, for $m\geq4$,
\begin{align}
\langle S_{ij}^m\rangle&
=\int \frac{1}{\sqrt{M}}p_{R}\left(\frac{r}{\sqrt{M}}\right)\left(\frac{\sin r}{r}\right)^m dr \nonumber\\
&\sim\frac{1}{M^\frac{3}{2}}\left[\frac{1}{\sqrt{4\pi }}\int_0^{+\infty} r^2\left(\frac{\sin r}{r}\right)^mdr\right],
\end{align}
by dominated convergence, as $M\to\infty$. Note that \emph{all} moments for $m\geq3$ scale as $\langle S_{ij}^m\rangle\sim a_m M^{-3/2}$ for large $M$. 

The constants 
\begin{equation}
a_m=\frac{1}{\sqrt{4\pi }}\int_0^{+\infty} r^2\left(\frac{\sin r}{r}\right)^mdr,
\end{equation}
can be computed using a method of Michell and Hardy~\cite{Abel23}. The explicit result is 
\begin{equation}
a_m=\dfrac{\sqrt{\pi } \sum _{k=0}^{\left\lfloor \frac{m}{2}\right\rfloor } (-1)^{k+1} \binom{m}{k}(m-2 k)^{m-3} }{2^{m+1} (m-3)!}.
\end{equation}

\end{document}